\newtheorem{theorem}{Theorem}[section]
\newtheorem{definition}[theorem]{Definition}
\newtheorem{corollary}[theorem]{Corollary}
\newtheorem{lemma}[theorem]{Lemma}
\newtheorem{property}{Property}
\title{The structure of covtree: searching for manifestly covariant causal set dynamics}
\author{Stav Zalel}
\affil{Blackett Laboratory, Imperial College London, SW7 2AZ, U.K.}
\newcommand{\lc}[1]{\tilde{#1}}
\def\twoch{\,\,\begin{picture}(0,1) 
\thicklines
\multiput(0,0)(0,10){2}{\circle*{2}}
\put(0,0){\line(0,1){10}}
\end{picture}\,\,}
\def\exvee{\,\,\begin{picture}(0,1) 
\thicklines
\multiput(0,0)(0,10){3}{\circle*{2}}
\put(0,0){\line(0,1){10}}
\put(0,10){\line(0,1){10}}
\put(0,0){\line(-1,2){5}}
\put(-5,10){\circle*{2}}
\end{picture}\,\,}
\begin{document}

\maketitle

\abstract{\textit{Covtree} - a partial order on certain sets of finite, unlabeled causal sets - is a manifestly covariant framework for causal set dynamics. Here, as a first step in picking out a class of physically well-motivated covtree dynamics, we study the structure of covtree and the relationship between its paths and their corresponding infinite unlabeled causal sets. We identify the paths which correspond to posts and breaks, prove that covtree has a self-similar structure, and write down a transformation between covtree dynamics akin to the cosmic renormalisation of Rideout and Sorkin's Classical Sequential Growth models. We identify the paths which correspond to causal sets which have a unique natural labeling, thereby solving for the class of dynamics which give rise to these causal sets with unit probability.}

\newpage
\tableofcontents

\newpage
\section{Introduction}

What is the role of general covariance in quantum gravity? In causal set theory (CST), where the quantum dynamics is still unknown, clues to this question come from studying the role of general covariance at the level of classical stochastic toy models. In CST, general covariance takes the form of label-independence: two causal sets are physically equivalent if they are order-isomorphic. This has a clear consequence for the observables in CST, namely that they cannot pertain to any labeling of the causal set or to the identity of the causal set elements.

But could general covariance also have direct consequences for the dynamics? This is indeed the case in the Classical Sequential Growth (CSG) models \cite{Rideout:1999ub,Varadarajan:2005gg} which satisfy the so-called Discrete General Covariance (DGC) condition. In these models a causal set (causet) grows probabilistically through a sequential birth of elements, and the order of births induces a labeling of the elements by the natural numbers. The DGC condition constrains the dynamics such that, if $\lc{C}_n$ and $\lc{C}_n'$ are order-isomorphic causets of cardinality $n$, then $\lc{C}_n$ and $\lc{C}_n'$ are equally likely to have been grown after the birth of the first $n$ elements.

In CSG models general covariance enters via the choice of observables and via the choice of dynamics, but general covariance is implicit because the history space is the space of infinite labeled causets\footnote{See section 2 for terminology.} and so each realisation is a labeled causet (and therefore not covariant). In contrast, in a manifestly covariant approach the history space would be the space of infinite orders, denoted by $\Omega$. In this case general covariance need no longer be imposed at the level of the observables because each realisation is covariant, but whether general covariance plays an additional role at the level of the dynamics is still unknown. Additionally, it is hoped that a manifestly covariant approach would be more amenable to quantisation.

A first proposal for such a manifestly covariant framework for classical causet dynamics is \textit{covtree} \cite{Dowker:2019qiz}. Covtree is a directed tree, each of whose nodes $\Gamma_n$ at level $n$ is a set of $n$-orders. For each infinite directed path from the origin, $\mathcal{P}=\{\Gamma_1,\Gamma_2,...\}$, there exists at least one infinite order $C$ whose set of $n$-stems, for every $n>0$, is $\Gamma_n\in\mathcal{P}$. We call $C$ a \textit{certificate} of $\mathcal{P}$. Via the correspondence between paths and their certificates, any set of Markovian transition probabilities on covtree defines a causet dynamics with the space of infinite orders, $\Omega$, acting as the history space (\textit{i.e.} a manifestly covariant dynamics). 

But not every set of Markovian transition probabilities on covtree defines a physically interesting dynamics. Identifying a subset of interesting dynamics is the motivation for this current work. One challenge lies in the translation of physically desirable conditions (\textit{e.g.} that manifold-like\footnote{We say an order $C$ is manifold-like if a representative of $C$ can be faithfully embedded into a four-dimensional Lorentzian manifold.} orders are preferred by the dynamics) into conditions on covtree transition probabilities. Doing so requires an understanding of the relationship between paths and their certificates (\textit{e.g.} which paths have manifold-like certificates). Closely related challenges include formulating a causality condition on covtree and understanding what additional constraints general covariance may impose on the transition probabilities (cf. the DGC condition in CSG models).

In addition to the relationship between paths and their certificates, an understanding of the structure of covtree is also important for constraining the dynamics. For example, any dynamics should satisfy the Markov sum rule: the sum of the transition probabilities from a node $\Gamma_n$ to each of its children must equal 1. But with no knowledge of the number of children or the relation they bear to $\Gamma_n$, this constraint is intractable. (In contrast, in the case of the CSG models, enough structural information is known to solve the Markov sum rule.)

In addressing these challenges, one might be tempted to construct covtree explicitly. Indeed, the first three levels of covtree are given in \cite{Dowker:2019qiz}, but brute force methods come up short in going to higher levels as the number of candidate nodes at level $n$ increases rapidly as $2^{|\Omega(n)|}-1$, where $|\Omega(3)|=5, |\Omega(5)|=63$ and $|\Omega(16)|=4483130665195087$ \cite{oeis}. In this work we make progress by focusing on structural properties which are independent of level.

The rest of this paper is structured as follows. Section 2 is dedicated to a presentation of terminology and notation. In section 3, led by the ideas of causal set cosmology, we identify the covtree paths whose certificates contain posts and breaks, prove that covtree has a self-similar structure and find the covariant analogue of the cosmic renormalisation transformation of CSG models. In section 4 we present additional structural features of covtree, as well as a toy example of how this structure can be used to constrain the dynamics. We conclude with a discussion in section 5.

\section{Terminology and notation}

Here we present a brief review of terminology and notation used in the paper. For further discussion and examples, we refer the reader to \cite{Dowker:2019qiz} and references therein. For a recent review of causal set theory, see \cite{Surya:2019ndm}.

\subsection{Labeled causets, orders and stems}

For any natural number $n$, let $[n]$ denote the set $\{0,1,...,n\}$.

A \textbf{labeled causet} is a locally finite partial order on ground set $[n]$ or $\mathbb{N}$ such that $x\prec y \implies x<y$. A labeled causet of cardinality $n$ is called an \textbf{$n$-causet}. We denote labeled causets and their subcausets by capital Roman letters with a tilde, \textit{e.g.} $\lc{C}$. 

If $x\prec y$ in $\lc{C}$ we say that $y$ is a \textbf{descendant} of $x$ or that $y$ is \textbf{above} $x$. If $x\prec y$ and there is no $z\in\lc{C}$ such that $x\prec z \prec y$ we say that $y$ is a \textbf{direct descendant} of $x$ or that $y$ is \textbf{directly above} $x$ or that $y$ is a \textbf{child} of $x$.  The \textbf{valency} of $x$ is the number of direct descendants of $x$.

An element $x\in \lc{C}$ is in \textbf{level $L$} in $\lc{C}$ if the longest chain of which $x$ is the maximal
element has cardinality $L$, \textit{e.g.} level 1 comprises the minimal elements. For any $x\in \lc{C}$, $L(x)$ is an integer which denotes the level of $x$, \textit{e.g.} $L(x)=1$ if $x$ is minimal.

If causets $\lc{C}$ and $\lc{D}$ are order-isomorphic we write $\lc{C}\cong\lc{D}$.

An \textbf{order} is an order-isomorphism class of labeled causets. An \textbf{$n$-order} is an order-isomorphism class of $n$-causets. We denote orders by capital Roman letters without a tilde, \textit{e.g.} $C$. The \textbf{cardinality of an $n$-order} is defined to be $n$. We denote the cardinality of an order $C$ by $|C|$.

We often (but not always) use a subscript to denote the cardinality of an $n$-causet or $n$-order, \textit{e.g.} $\lc{C}_n$ or $C_n$. 

A \textbf{stem} in a labeled causet $\lc{C}$ is a finite subcauset $\lc{S}$ in $\lc{C}$ such that if $y\in \lc{S}$ and $x\prec y$ in $\lc{C}$ then $x \in \lc{S}$. We say a finite order, $S$, is a stem in order $C$ if there exists a representative of $S$ which is a stem in a representative of $C$. We say a finite order, $S$, is a stem in labeled causet $\lc{C}$ if the order $S$ is a stem in the order $[\lc{C}]$. So the meaning of stem depends on context. If a stem has cardinality $n$ we say it is an \textbf{$n$-stem}.

An infinite order $C$ is a \textbf{rogue} if there exists an infinite order $D$ such that $D \neq C$ and the two orders have the same stems.

\subsection{Labeled poscau and CSG dynamics}

Let $\tilde{\Omega}(\mathbb{N})$ and $\tilde{\Omega}$ denote the set of all finite and infinite labeled causets, respectively.

\textbf{Labeled poscau} is the partial order $(\tilde{\Omega}(\mathbb{N}), \prec)$, where $\lc{S}\prec \lc{R}$ if and only if $\lc{S}$ is a stem in $\lc{R}$.\footnote{We use the symbol $\prec$ to denote the relation for several different partial orders in this work. The meaning of $\prec$ in each case is to be inferred from the context.}

A \textbf{poscau dynamics} is a complete set of Markovian transition probabilities on labeled poscau. We denote the probability of transition from $\lc{C}_n$ to one of its children $\lc{C}_{n+1}$ by $\mathbb{P}(\lc{C}_n\rightarrow\lc{C}_{n+1})$. We denote the probability of a directed random walk to pass through $\lc{C}_n$ by $\mathbb{P}(\lc{C}_n)$.  

\textbf{Classical Sequential Growth (CSG) models} are a family of poscau dynamics which satisfy the so-called Bell Causality and Discrete General Covariance conditions of \cite{Rideout:1999ub}. Each model in the family is specified by an infinite set of real positive coupling constants, $\{t_k\}_{k\in\mathbb{N}}$, with $t_0>0$. The CSG transition probabilities take the form:
\begin{align}\label{transprob}
\mathbb{P}(\tilde{C}_n \rightarrow \tilde{C}_{n+1}) &= \frac{\lambda(\varpi, m)}{\lambda(n, 0)}\,,
\end{align}
where $\varpi$ and $m$ are positive integers which depend on $\tilde{C}_n$ and $\tilde{C}_{n+1}$, and  
\begin{equation} 
\lambda(k, p) := \sum_{i = 0}^{k - p} \binom{k-p}{i} t_{p+i}.\end{equation}   

\textbf{Originary CSG models} are a family of poscau dynamics which differ from CSG models only by the requirement that $t_0=0$.

Each poscau dynamics is equivalent to a measure space $(\lc{\Omega},\lc{\mathcal{R}},\lc{\mu})$, where $\lc{\Omega}$ is the set of infinite labeled causets, $\lc{\mathcal{R}}$ is the sigma-algebra generated by the cylinder sets, 
\begin{equation}cyl(\lc{C}_n):=\{\lc{C}\in\tilde{\Omega}|\lc{C}_n \text{ is a stem in } \lc{C}\},\end{equation}
and the measure $\lc{\mu}$ is defined via $\mu(cyl(\lc{C}_n))=\mathbb{P}(\lc{C}_n)$ for every $\lc{C}_n$.
The covariant sigma-algebra, $\mathcal{R}$, is a sub-algebra of $\lc{\mathcal{R}}$ defined by 
\begin{equation}\mathcal{R}:=\{\mathcal{E}\in \lc{\mathcal{R}}| \lc{C}\in\mathcal{E} \text{ and }  \lc{C}\cong\lc{D} \implies \lc{D}\in\mathcal{E}\}.\end{equation}

\subsection{Covtree}

Let $\Omega$ denote the set of infinite orders. Let $\Omega(n)$ denote the set of all $n$-orders for some $n\in\mathbb{N}^+$, and let $\Gamma_n$ denote a non-empty subset of $\Omega(n)$, \textit{i.e.} a set of $n$-orders.

An order $C$ is a \textbf{certificate} of $\Gamma_n$ if $\Gamma_n$ is the set of all $n$-stems in $C$. If $C$ is a certificate of $\Gamma_n$ and there exists no stem $D\not=C$ in $C$ which is a certificate of $\Gamma_n$, then we say that $C$ is a \textbf{minimal certificate} of $\Gamma_n$. A labeled causet $\lc{C}$ is a \textbf{labeled certificate} of $\Gamma_n$ if it is a representative of a certificate $C$ of $\Gamma_n$.

For any $n$ and any $\Gamma_n$, the map ${{\mathcal O}}_{-}$ takes $\Gamma_n$ to the set of $(n-1)$-stems of elements of $\Gamma_n$:
\begin{equation} 
{\mathcal O}_{-}(\Gamma_n):=\{B \in \Omega(n-1)\ | \  \exists \ A\in \Gamma_n\ \mathrm{ s.t. }\ B \text{ is a stem in } A \}
\,. \label{o_minus_defn}
\end{equation} 

Let $\Lambda$ denote the collection of sets of $n$-orders, for all $n$, which have certificates:
\begin{align}
\Lambda := \bigcup\limits_{n\in\mathbb{N}^+}\{\Gamma_n \subseteq \Omega(n)| \exists \text{ a certificate for } \Gamma_n\}\,. 
\end{align}

\textbf{Covtree} is the partial order $(\Lambda, \prec)$, where $\Gamma_n\prec\Gamma_{m}$ if and only if $n<m$ and ${\mathcal {O}_{-}}^{m-n}(\Gamma_m)=\Gamma_n$.

If  $\Gamma_n\in\Lambda$, we say that $\Gamma_n$ is a \textbf{node} in covtree. If $\Gamma_n$ is a node in covtree and $\Gamma_n$ contains a single $n$-order, we say that  $\Gamma_n$ is a \textbf{singleton}.  If $\Gamma_n$ is a node in covtree and $\Gamma_n$ contains exactly two $n$-orders, we say that  $\Gamma_n$ is a \textbf{doublet}.

Let $\mathcal{P}=\{\Gamma_1,\Gamma_2,\Gamma_3,...\}$ be a path in covtree. An infinite order $C$ is a \textbf{certificate of path $\mathcal{P}$} if $\Gamma_n\in \mathcal{P}$ is the set of $n$-stems in $C$, for every $n$. Every path has at least one certificate, and every infinite order is a certificate of exactly one path.
 
A \textbf{covtree dynamics} is a complete set of Markovian transition probabilities on covtree. We denote the probability of transition from $\Gamma_n$ to one of its children $\Gamma_{n+1}$ by $\mathbb{P}(\Gamma_n\rightarrow\Gamma_{n+1})$. We denote the probability of a directed random walk to pass through $\Gamma_n$ by $\mathbb{P}(\Gamma_n)$. We denote a covtree dynamics by $\{\mathbb{P}\}$.
 
The \textbf{certificate set}, $cert(\Gamma_n)$, of some node $\Gamma_n$ is the set of infinite orders which are certificates of $cert(\Gamma_n)$:
\begin{equation}\label{eq25072002}cert(\Gamma_n)=\{C\in\Omega|C \text{ is a certificate of } \Gamma_n\}.\end{equation}

Each covtree dynamics, $\{\mathbb{P}\}$, is equivalent to a measure space $(\Omega,\mathcal{R}(\mathcal{S}),\mu)$, where $\Omega$ is the set of infinite orders, $\mathcal{R}(\mathcal{S})$ is the sigma-algebra generated by the certificate sets\footnote{In the literature (\textit{e.g.} \cite{Brightwell:2002vw}) $\mathcal{R}(\mathcal{S})$ denotes the sigma-algebra generated by the \textit{stem sets}. It is a result of \cite{Dowker:2019qiz} that the sigma-algebras generated by the stem sets and the certificate sets are equal.} and $\mu$ is a measure defined via $\mu(cert(\Gamma_n))=\mathbb{P}(\Gamma_n)$ for all $\Gamma_n$.

Equivalently, we can conceive of each certificate set as a set of labeled causets:
\begin{equation}cert(\Gamma_n):=\{\lc{C}\in\lc{\Omega}|\lc{C} \text{ is a labeled certificate of } \Gamma_n\}.\end{equation}
In this formulation, $\mathcal{R}(\mathcal{S})$ is a set of subsets of $\lc{\Omega}$. In fact, $\mathcal{R}(\mathcal{S})\subset\mathcal{R}\subset\lc{\mathcal{R}}$ and therefore every poscau dynamics induces a covtree dynamics via a restriction of the measure $\lc{\mu}$ from $\lc{\mathcal{R}}$ to $\mathcal{R}(\mathcal{S})$ . We say that a covtree dynamics is a CSG dynamics if it is the restriction of a CSG dynamics.

\section{Covtree and causal set cosmology}\label{sec_renorm}

In the heuristic causal set cosmology paradigm proposed in \cite{Sorkin:1998hi}, the cosmos emerges from the quantum gravity era sufficiently flat, homogeneous and isotropic to explain present-day observations (without the need for a period of inflation). Within the context of CSG models, the fine-tuning problem is that of choosing a CSG dynamics which displays this behavior almost surely. The need for fine-tuning is overcome by a ``cosmic renormalisation'' associated with cycles of expansion and collapse, punctuated by Big-Crunch--Big-Bang singularities.

At least heuristically, posts and breaks are the causal set structures which underlie Big-Crunch--Big-Bang singularities  \cite{Martin:2000js,Dowker:2017zqj}. Let $\lc{C}$ be a labeled causet. A \textbf{post} is an element $x\in\lc{C}$ which is related to every other element in $\lc{C}$. A \textbf{break} in $\lc{C}$ is an ordered pair, $(\lc{A},\lc{B})$, of nonempty subsets of $\lc{C}$ such that
\begin{itemize}
\item[(i)] $a\in \lc{A}, b\in \lc{B} \implies a\prec b$, and
\item[(ii)] $\{\lc{A},\lc{B}\}$ is a partition of $\lc{C}$.
\end{itemize}
We call $\lc{A}$ and $\lc{B}$ the past and future of the break, respectively. If $\lc{C}$ contains a break with past $\lc{A}$ we say that $\lc{C}$ contains an $\lc{A}$-break. If $\lc{C}$ contains a post $x$ with $past(x)=\lc{A}$ we say that $\lc{C}$ contains an $\lc{A}$-post\footnote{We are using the non-inclusive past convention, so $x\notin past(x)$.}. An illustration of a post and a break is shown in figure \ref{pb}.

 \begin{figure}[h!]
  \centering
	\includegraphics[width=0.5\textwidth]{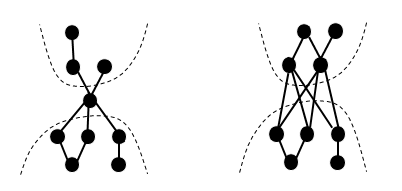}
	\caption{Illustration of a post (left) and a break (right). On the left, the post lies between the dashed lines. On the right, the dashed lines illustrate the partition between the past and the future.}
\label{pb}
\end{figure}

In causal set theory nothing can be smaller than a single spacetime atom, therefore our intuition points us towards a post as the underlying structure of a singularity caused by collapse. Meanwhile, the break is a generalisation of the post which retains the key feature we require of Big-Crunch--Big-Bang singularities: the partitioning of the set into two, a past and a future.

Associated with each post and break is a cosmic renormalisation transformation which acts on the CSG couplings. The renormalised couplings define an effective dynamics which governs the growth of the future, echoing proposals by Wheeler, Smolin and others that the parameters of nature are modified as the universe is ``squeezed through'' a singularity \cite{Rees:1974,wheelerint,universeevolve,status}.

In this cosmological paradigm the fine-tuning problem is resolved by an evolutionary mechanism. It rests on the hypothesis that there exist stationary points of the renormalisation transformation which give rise to the desired cosmological features, and that the basin of attraction of these stationary points is large and contains an abundance of dynamics each of which gives rise to an infinite sequence of Big-Crunch--Big-Bang singularities with unit probability \cite{Martin:2000js,Ash:2005za, Ash:2002un, Dou:1999fw, Brightwell:2016}. Given this, no fine-tuning is required for our universe to be governed by a dynamics in the basin of attraction which gives rise to an infinite sequence of singularities. At each singularity the couplings undergo a renormalisation, and in this way a flow towards the stationary point is generated in the space of couplings. It is then only a matter of time until our universe displays the desired behaviour.

This narrative acts as guidance as to which covtree dynamics we should be seeking, namely:
\begin{itemize}
\item[(a)] dynamics in which an infinite sequence of posts or breaks happens almost surely,
\item[(b)] dynamics which are stationary points of the renormalisation transformation,
\item[(c)] dynamics which flow to a stationary point under the renormalisation transformation.
\end{itemize}

We begin section \ref{subsec_131201} by recasting the definitions of post and break in covariant form\footnote{The definitions of post and break given above are not covariant because they pertain to labeled causets, not orders.}. We identify which covtree paths correspond to orders with posts and breaks and write down the defining feature of covtree walks which belong to family $(a)$. In section \ref{subsec_151202} we show that covtree has a self-similar structure. In section \ref{subsec_151201} we use covtree's self-similarity to solve for the covtree walks which belong to family (b). We conclude with a discussion of open questions, including a proposal for a causality condition for covtree dynamics.

\subsection{Certificates with posts and breaks}\label{subsec_131201}

In solving for the covtree walks which belong to family $(a)$, a question arises: which paths correspond to orders with posts and breaks? To pose this question more precisely, let us extend the definitions of posts and breaks to orders. Let $\lc{C}$ and $\lc{A}$ be representatives of orders $C$ and $A$, respectively. We say that an order $C$ contains a break (post) with past $A$ if $\lc{C}$ contains a break (post) with past $\lc{A}$. If order $C$ contains a break (post) with past $A$ we say that $C$ contains an $A$-break ($A$-post). Our question then becomes: which paths have certificates with posts and breaks?

To answer this question we introduce the concept of the \textbf{covering causet}. If $\lc{C}$ is a labeled causet of cardinality $n$ then its covering causet is the labeled causet of cardinality $n+1$ which is formed by putting the element $n$ above every element of $\lc{C}$, and we denote it by putting a hat on: $\widehat{\lc{C}}$. If $\lc{C}$ and $\widehat{\lc{C}}$ are representatives of orders $C$ and $\widehat{{C}}$, respectively, then we say that $\widehat{C}$ is the \textbf{covering order} of $C$. An example is shown in figure \ref{cover}.

We will show that:
\begin{theorem}\label{lemma131201}
Let order $C$ be a certificate of a path $\mathcal{P}=\{\Gamma_1, \Gamma_2, ...\}$. \begin{enumerate}\item $C$ contains an $A$-break if and only if $\{\widehat{A}\}$ is a node in $\mathcal{P}$. \item $C$ contains an $A$-post if and only if $\{\Hat{\Hat{A}}\}$ is a node in $\mathcal{P}$ (where $\Hat{\Hat{A}}$ is the covering order of the covering order of $A$). \end{enumerate} \end{theorem}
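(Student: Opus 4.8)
The plan is to first restate ``$C$ contains an $A$-break'' in a form adapted to stems. Fix a representative $\lc{C}$ of $C$ and write $k:=|A|$. The past $\lc{A}$ of a break is automatically a down-set in $\lc{C}$ (a descendant of an element of $\lc{A}$ cannot lie in the future), and the condition ``every element of the future lies above every element of $\lc{A}$'' --- because the future is an up-set in the locally finite order $\lc{C}$ --- is equivalent to the same statement restricted to the minimal elements of the future, hence to ``every element $m$ with $\lc{A}\cup\{m\}$ a down-set satisfies $past(m)=\lc{A}$''. So: $C$ contains an $A$-break if and only if $\lc{C}$ has a $k$-element down-set $\lc{A}\cong A$ such that every $(k{+}1)$-element down-set $\lc{A}\cup\{m\}$ extending $\lc{A}$ has $past(m)=\lc{A}$, i.e.\ is a covering causet of $\lc{A}$.

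For the forward implication of part 1, assume $\lc{C}$ has a break with past $\lc{A}\cong A$. Any $(k{+}1)$-element down-set $\lc{D}$ of $\lc{C}$ has more than $k$ elements, hence contains an element $z$ of the future; since $z$ lies above all of $\lc{A}$, down-closure of $\lc{D}$ forces $\lc{A}\subseteq\lc{D}$, so $\lc{D}=\lc{A}\cup\{w\}$ and $past(w)=\lc{A}$, giving $\lc{D}\cong\widehat{A}$. As $C$ is infinite it has $(k{+}1)$-element down-sets, so $\Gamma_{k+1}=\{\widehat{A}\}$, i.e.\ $\{\widehat{A}\}$ is a node of $\mathcal{P}$.

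The converse is the substantive direction. Suppose $\{\widehat{A}\}$ is a node of $\mathcal{P}$; then $\Gamma_{k+1}=\{\widehat{A}\}$ and, applying $\mathcal{O}_{-}$, $\Gamma_{k}=\{A\}$, so in $\lc{C}$ every $(k{+}1)$-element down-set is isomorphic to $\widehat{A}$ --- and therefore has a unique maximal element, lying above the other $k$ elements --- while every $k$-element down-set is isomorphic to $A$. The key claim, which I expect to be the main obstacle, is that $\lc{C}$ has a \emph{unique} $k$-element down-set $\lc{A}^{*}$. To prove it, suppose $\lc{A}_{1}\neq\lc{A}_{2}$ are two such down-sets and pick $x\in\lc{A}_{1}\setminus\lc{A}_{2}$. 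The down-set $\lc{E}:=\lc{A}_{2}\cup past(x)\cup\{x\}$ has at least $k{+}1$ elements, so $past(x)\cup\{x\}$ can be enlarged, by repeatedly adjoining a minimal element of the complement in $\lc{E}$, to a $(k{+}1)$-element down-set $\lc{D}\subseteq\lc{E}$; then $x\in\lc{D}$ and $\lc{D}\cong\widehat{A}$, so $\lc{D}$ has a unique maximal element $t$ with $past(t)=\lc{D}\setminus\{t\}$. Since $|past(x)|\le k-1<k=|\lc{D}\setminus\{x\}|$ we get $t\neq x$, hence $x\prec t$; but $t\in\lc{E}$ and $t$ is neither $x$ nor a member of $past(x)$ (which would give $t\prec x$), so $t\in\lc{A}_{2}$, and down-closure of $\lc{A}_{2}$ forces $x\in past(t)\subseteq\lc{A}_{2}$, contradicting $x\notin\lc{A}_{2}$. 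Granting uniqueness, any $(k{+}1)$-element down-set extending $\lc{A}^{*}$ is isomorphic to $\widehat{A}$, so its maximal element $t$ has $past(t)$ a $k$-element down-set, necessarily $\lc{A}^{*}$; hence the extension is the covering causet of $\lc{A}^{*}$, and by the reformulation $\lc{C}$ contains an $\lc{A}^{*}$-break, which is an $A$-break since $\lc{A}^{*}\cong A$.

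Part 2 then follows from part 1 via the observation that $C$ contains an $A$-post if and only if $C$ contains an $\widehat{A}$-break. Indeed, if $x$ is a post with $past(x)\cong A$ then $past(x)\cup\{x\}\cong\widehat{A}$ is a down-set, and $x$ being comparable to every element of $\lc{C}$ forces every element outside $past(x)\cup\{x\}$ to lie above $x$, hence above all of $past(x)$, so $(past(x)\cup\{x\},\,\lc{C}\setminus(past(x)\cup\{x\}))$ is a break with past $\widehat{A}$; conversely the (unique) maximal element of the past of an $\widehat{A}$-break is a post whose own past is isomorphic to $A$. Applying part 1 to the order $\widehat{A}$, of cardinality $k{+}1$, now gives that $C$ contains an $A$-post iff $\{\widehat{\widehat{A}}\}$ is a node of $\mathcal{P}$. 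Apart from the uniqueness claim, each step is routine bookkeeping with down-sets and covering causets.
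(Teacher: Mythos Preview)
Your proof is correct and follows essentially the same route as the paper: both establish that $C$ contains an $A$-break iff $\widehat{A}$ is the unique $(|A|+1)$-stem of $C$ (the paper packages this as a separate lemma showing the equivalence of ``$\lc{A}$-break'', ``every $(|\lc{A}|+1)$-stem $\cong\widehat{\lc{A}}$'', and ``$\lc{A}$ is the unique $|\lc{A}|$-stem''), and both reduce part~2 to part~1 via the observation that an $A$-post is exactly an $\widehat{A}$-break.

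The only notable difference is your uniqueness argument. The paper's is shorter: fix one $k$-stem $\lc{A}$, suppose $\lc{D}\neq\lc{A}$ is another, pick $y$ minimal in $\lc{D}\setminus\lc{A}$; then $\lc{A}\cup\{y\}$ is a $(k{+}1)$-stem, hence $\cong\widehat{\lc{A}}$, so $y$ lies above all of $\lc{A}$, forcing $\lc{A}\cup\{y\}\subseteq\lc{D}$ and $|\lc{D}|>k$. Your version --- building an auxiliary $(k{+}1)$-down-set inside $\lc{A}_2\cup past(x)\cup\{x\}$ and chasing its maximal element back into $\lc{A}_2$ --- is valid but more elaborate than necessary; the paper's trick of adjoining a single minimal ``foreign'' element to the fixed stem gets the contradiction in one step.
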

 
To prove theorem \ref{lemma131201} we will need the following lemma about labeled causets which contain breaks:
\begin{lemma}\label{lemma111201}
Let $\lc{C}$ and $\lc{A}$ be labeled causets. The following statements are equivalent:
\begin{itemize}
\item[(i)] $\lc{C}$ contains an $\lc{A}$-break,
\item[(ii)] every $(|\lc{A}|+1)$-stem in $\lc{C}$ is isomorphic to $\widehat{\lc{A}}$,
\item[(iii)] $\lc{A}$ is the unique $|\lc{A}|$-stem in $\lc{C}$.
\end{itemize}
\end{lemma}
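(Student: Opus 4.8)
I will prove the three-way equivalence by the cycle (i)~$\Rightarrow$~(ii)~$\Rightarrow$~(iii)~$\Rightarrow$~(i). Write $n:=|\lc{A}|$ and assume $|\lc{C}|>n$ (this is implicit in the statement: if $|\lc{C}|=n$ then $\lc{C}$ has no break at all). Everything rests on a few bookkeeping facts about stems, i.e. down-closed subsets, of $\lc{C}$. First, if $\lc{D}$ is a stem with $|\lc{D}|<|\lc{C}|$ and $e$ is a minimal element of $\lc{C}\setminus\lc{D}$, then $past(e)\subseteq\lc{D}$ and $\lc{D}\cup\{e\}$ is again a stem; hence every stem of size $k<|\lc{C}|$ extends to a stem of size $k+1$. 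Second, deleting a maximal element of a stem leaves a stem. Third, by local finiteness every nonempty subset of $\lc{C}$ has a minimal element, and $x\prec y$ implies $past(x)\subseteq past(y)$. Finally, $\widehat{\lc{A}}$ has a unique maximal element (the adjoined top), a property preserved by isomorphism, and deleting that top recovers $\lc{A}$. A routine antisymmetry check also shows that the past $\lc{A}_0$ of any break $(\lc{A}_0,\lc{B}_0)$ is itself a stem.

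For (i)~$\Rightarrow$~(ii): let $(\lc{A}_0,\lc{B}_0)$ be an $\lc{A}$-break, so $\lc{A}_0\cong\lc{A}$ is a stem and every element of $\lc{B}_0$ dominates all of $\lc{A}_0$. Let $\lc{T}$ be any $(n+1)$-stem. If $\lc{A}_0\not\subseteq\lc{T}$ then $|\lc{T}\cap\lc{A}_0|\leq n-1$, so $\lc{T}$ meets $\lc{B}_0$ in some $b$; but then $\lc{A}_0\subseteq past(b)\subseteq\lc{T}$ because $\lc{T}$ is down-closed, a contradiction. Hence $\lc{A}_0\subseteq\lc{T}$, so $\lc{T}=\lc{A}_0\cup\{b\}$ for a single $b\in\lc{B}_0$ which lies above all of $\lc{A}_0$; that is, $\lc{T}\cong\widehat{\lc{A}_0}\cong\widehat{\lc{A}}$.

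For (ii)~$\Rightarrow$~(iii): an $(n+1)$-stem exists by extension, it is isomorphic to $\widehat{\lc{A}}$, and deleting its unique maximal element yields an $n$-stem isomorphic to $\lc{A}$, which gives existence. The crux is uniqueness. I will first show that \emph{for any $n$-stem $\lc{S}$ and any minimal element $e$ of $\lc{C}\setminus\lc{S}$ one has $\lc{S}=past(e)$}: indeed $\lc{S}\cup\{e\}$ is an $(n+1)$-stem, hence isomorphic to $\widehat{\lc{A}}$ and so with a unique maximal element; since $e$ is maximal in $\lc{S}\cup\{e\}$ it must be \emph{the} maximal one, whence every element of $\lc{S}$ lies below $e$, i.e. $\lc{S}\subseteq past(e)$, and the reverse inclusion is immediate. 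Now suppose $\lc{S}_1\neq\lc{S}_2$ were two $n$-stems; pick $e$ minimal in $\lc{S}_1\setminus\lc{S}_2$. Minimality forces $past(e)\subseteq\lc{S}_2$, so $e$ is also minimal in $\lc{C}\setminus\lc{S}_2$, and the displayed fact gives $\lc{S}_2=past(e)\subseteq\lc{S}_1$; equal cardinalities then give $\lc{S}_1=\lc{S}_2$, a contradiction. So $\lc{A}$ is the unique $n$-stem.

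For (iii)~$\Rightarrow$~(i): let $\lc{A}_0$ be the unique $n$-stem and $\lc{B}_0:=\lc{C}\setminus\lc{A}_0\neq\emptyset$. If every minimal element of $\lc{B}_0$ dominated all of $\lc{A}_0$, then by monotonicity of the past so would every element of $\lc{B}_0$, making $(\lc{A}_0,\lc{B}_0)$ a break; otherwise some minimal $b\in\lc{B}_0$ has $past(b)\subsetneq\lc{A}_0$, in which case $past(b)\cup\{b\}$ is a stem of size $\leq n$ containing $b\notin\lc{A}_0$, which extends to an $n$-stem distinct from $\lc{A}_0$ --- contradicting uniqueness. Hence the first alternative holds and $(\lc{A}_0,\lc{B}_0)$ is a break with past $\lc{A}_0\cong\lc{A}$. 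The one genuinely non-mechanical step, where I expect to have to be most careful, is the uniqueness argument in (ii)~$\Rightarrow$~(iii): establishing $\lc{S}=past(e)$ for every minimal $e$ in the complement, and then comparing two stems through a minimal element of their difference. The remaining implications are manipulations of down-sets once the extension and deletion facts are in hand, the only other subtlety being to keep the degenerate case $|\lc{C}|=|\lc{A}|$ out of the way.
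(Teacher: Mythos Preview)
Your proof is correct and follows the same cyclic scheme $(i)\Rightarrow(ii)\Rightarrow(iii)\Rightarrow(i)$ as the paper, with the same core idea in each step: form an $(n+1)$-stem by adjoining a minimal element of the complement and use (ii) to force that element above everything. The only cosmetic difference is that the paper assumes at the outset that $\lc{A}$ is literally a stem in $\lc{C}$ and compares any other $n$-stem $\lc{D}$ directly to it, whereas you work up to isomorphism and prove uniqueness abstractly via the auxiliary fact $\lc{S}=past(e)$; your version is slightly more detailed (particularly in $(iii)\Rightarrow(i)$, which the paper dispatches in one line) but not substantively different.
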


\begin{proof}
Let $\lc{A}$ be a stem in $\lc{C}$. Let $x$ denote a minimal element in $\lc{C}\setminus\lc{A}$. Let $a$ denote an element in $\lc{A}$.
\begin{itemize}
\item[$(i)\implies(ii)$] It follows from the definition of an $\lc{A}$-break that every $(|\lc{A}|+1)$-stem in $\lc{C}$ is of the form $\lc{A}\cup\{x\}$ and that each such stem is isomorphic to $\widehat{\lc{A}}$.
\item[$(ii)\implies(iii)$] Suppose for contradiction that $\lc{D}$ is an $|\lc{A}|$-stem in $\lc{C}$, and $\lc{D}\not=\lc{A}$. Let $y$ be minimal in $\lc{D}\setminus \lc{A}$. Then $\lc{A}\cup\{y\}$ is an $(|\lc{A}|+1)$-stem in $\lc{C}$. By assumption $(ii)$, $y\succ a$ for all $a$ in $\lc{A}$, and therefore (by definition of stem) $\lc{A}\cup\{y\}\subseteq \lc{D}$ which in turn implies that $|\lc{D}|>|\lc{A}|$. Contradiction. 
\item[$(iii)\implies(i)$] By assumption $(iii)$, $x\succ a$ for all $x$ and $a$, and hence (by definition of break) $\lc{C}$ contains an $\lc{A}$-break.
\end{itemize}
\end{proof}

The following covariant statement is a corollary:
\begin{corollary}\label{cor_161201}
An order $C$ contains an $A$-break if and only if $\widehat{A}$ is its unique $(|A|+1)$-stem.
\end{corollary}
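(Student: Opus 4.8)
The plan is to deduce this directly from the equivalence $(i)\Leftrightarrow(ii)$ of Lemma \ref{lemma111201}, which is essentially the ``labeled shadow'' of the corollary; the only work is to transport that equivalence carefully from labeled causets to orders. Write $m:=|A|$ and fix representatives $\lc A$ of $A$ and $\lc C$ of $C$.

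First I would dispose of the degenerate range: if $|C|\le m$, then $C$ has no $(m+1)$-stem and also contains no $A$-break (a break has nonempty future), so both sides of the statement are false; hence we may assume $|C|\ge m+1$, in which case $\lc C$ has at least one $(m+1)$-stem. Next I would record the dictionary entries I need. By the definition given just above the corollary, ``$C$ contains an $A$-break'' is equivalent to ``$\lc C$ contains an $\lc A$-break'' and is independent of the chosen representatives. Since $\widehat{\lc A}$ is a representative of $\widehat A$, for any $(m+1)$-stem $\lc S$ of $\lc C$ we have $[\lc S]=\widehat A$ if and only if $\lc S\cong\widehat{\lc A}$. Finally, the $(m+1)$-stems of the order $C$ are exactly the classes $[\lc S]$ as $\lc S$ ranges over the $(m+1)$-stems of $\lc C$. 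Combining these, ``$\widehat A$ is the unique $(m+1)$-stem of $C$'' is equivalent to ``every $(m+1)$-stem in $\lc C$ is isomorphic to $\widehat{\lc A}$'' (the nonemptiness clause being automatic in the range $|C|\ge m+1$).

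With the dictionary set up, the proof is the one-line chain
$$C \text{ contains an } A\text{-break} \iff \lc C \text{ contains an } \lc A\text{-break} \iff \text{every } (m+1)\text{-stem in } \lc C \text{ is} \cong \widehat{\lc A} \iff \widehat A \text{ is the unique } (|A|+1)\text{-stem of } C,$$
where the middle step is $(i)\Leftrightarrow(ii)$ of Lemma \ref{lemma111201}. One small point deserves a check for full rigour: the proof of Lemma \ref{lemma111201} is written under the assumption that $\lc A$ is a stem in $\lc C$, so I would note that this hypothesis is met in both directions here --- in the forward direction an $\lc A$-break exhibits $\lc A$ as a down-set of $\lc C$, and in the backward direction the implication $(ii)\Rightarrow(iii)$ already yields that $\lc A$ is a stem in $\lc C$. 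I do not anticipate any genuine obstacle beyond Lemma \ref{lemma111201} itself; the only thing requiring attention is the labeled/unlabeled bookkeeping and the $|C|\le m$ edge case, both of which are routine.
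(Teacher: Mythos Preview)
Your proof is correct and follows exactly the route the paper intends: the paper states the corollary with no argument beyond ``the following covariant statement is a corollary'' of Lemma~\ref{lemma111201}, and your write-up is precisely the routine transport of the labeled equivalence $(i)\Leftrightarrow(ii)$ to the level of orders. One small remark: your justification that the stem hypothesis of the lemma is met in the backward direction via $(ii)\Rightarrow(iii)$ is slightly circular, since the paper's proof of $(ii)\Rightarrow(iii)$ already assumes $\lc A$ is a stem; the clean fix is to note that any $(m+1)$-stem of $\lc C$ (which exists once $|C|\ge m+1$) is a representative of $\widehat A$, and deleting its top element produces a stem of $\lc C$ representing $A$, which you may take as your $\lc A$.
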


We can now prove theorem \ref{lemma131201}:
\begin{proof}[Proof of theorem \ref{lemma131201}.] Let $C$ be a certificate of path $\mathcal{P}=\{\Gamma_1, \Gamma_2, ...\}$. Recall that, by definition, the set of $n$-stems of $C$ is the node $\Gamma_n$ in $\mathcal{P}$.

To prove part 1, suppose $C$ contains an $A$-break. Then by corollary \ref{cor_161201}, the set of $(|A|+1)$-stems of $C$ is $\{\widehat{A}\}$. By definition of certificate, $\{\widehat{A}\}$ is in $\mathcal{P}$. Now suppose $\{\widehat{A}\}$ is in $\mathcal{P}$. Then by definition of certificate, $\{\widehat{A}\}$ is the set of $(|A|+1)$-stems of $C$. By corollary \ref{cor_161201}, $C$ contains an $A$-break.

Part 2 follows from the fact that $C$ contains an $A$-post if and only if $C$ contains an $\widehat{A}$-break. 
\end{proof}

We have successfully identified which paths correspond to posts and breaks, and we can now characterise the covtree walks which belong to family (a):
\begin{itemize}
\item[] A covtree dynamics in which an infinite sequence of breaks happens with unit probability is one which, with unit probability, passes through infinitely many nodes of the form $\{\widehat{A}\}$, \textit{i.e.} singletons which contain a covering order. 
\item[] A covtree dynamics in which an infinite sequence of posts happens with unit probability is one which, with unit probability, passes through  infinitely many nodes of the form $\{\widehat{\widehat{A}}\}$, \textit{i.e.} singeltons which contain a covering order of a covering order.
\end{itemize}
An illusration is shown in figure \ref{fig_15072001}.
 \begin{figure}[p]
	\centering
	\includegraphics[width=0.4\textwidth]{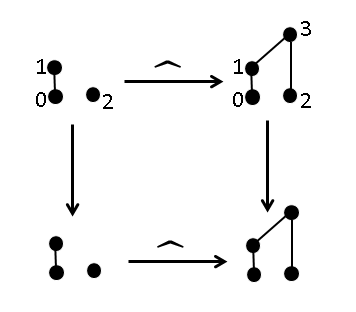}
	\caption{The horizonatal arrows illustrate the relationship between a causet (order) and its covering causet (order), as indicated by the hat. The vertical arrows illustrate the relationship between an order and its representative.}
	\label{cover}
\end{figure}

\begin{figure}[p]
	\centering
	\includegraphics[width=0.15\textwidth]{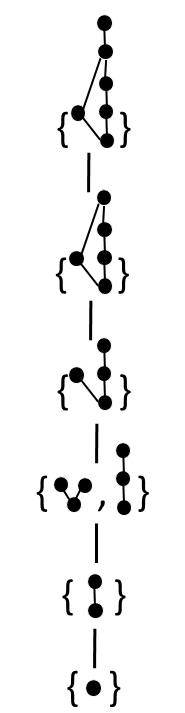}
	\caption{Illustration of a path whose certificate contains breaks and posts. The first six nodes of a covtree path $\mathcal{P}$ are shown. The nodes at level 2, 5 and 6 are singletons which contain a covering order. A certificate $C$ of $\mathcal{P}$ contains a break whose past is the 1-order (corresponding to the node at level 2) and a post whose past is the 4-order \protect\exvee (corresponding to the node at level 6). That $C$ contains this post implies that $C$ also contains a break whose past is the 4-order \protect\exvee (corresponding to the node at level 5).}
	\label{fig_15072001}
\end{figure}

\subsection{Covtree self-similarity}\label{subsec_151202}

To identify covtree dynamics which fall into families (b) and (c) we must first understand how the renormalisation transformation is manifest on covtree. This turns out to be inextricably linked to covtree's self-similar structure. In this section we identify this self-similarity.

Let us begin by defining what we mean by a self-similar structure of a partial order. Let $\Pi$ and $\Psi$ be partial orders. We say that $\Psi$ contains a copy of $\Pi$ if there exists a convex sub-order $\Pi'\subseteq \Psi$ which is order-isomorphic to $\Pi$. If $\Psi$ contains infinitely many copies of itself we say that $\Psi$ is self-similar.

Let us denote the covtree partial order by $\Lambda$. For any finite order $A$, let $\Lambda_A\subset \Lambda$ be the convex sub-order of covtree which contains the singleton $\{\widehat{A}\}$ and everything above it. We will show that:
\begin{lemma}\label{cor_29111901}  For any finite order $A$, $\Lambda_A$ is a copy of covtree. \end{lemma}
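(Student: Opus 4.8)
The plan is to construct an explicit order-isomorphism between $\Lambda$ (all of covtree) and $\Lambda_A$ (the sub-order of covtree consisting of $\{\widehat{A}\}$ and everything above it), and then to check that $\Lambda_A$ is convex in $\Lambda$. The natural candidate for the isomorphism is the map that ``prepends $A$ as a common past'': given a node $\Gamma_n$ of covtree (a set of $n$-orders with a certificate), send it to the set of $(n+|A|+1)$-orders obtained by taking each certificate-compatible order and sitting $\widehat{A}$ beneath it — concretely, if $\lc D$ is a labeled certificate of $\Gamma_n$ then form $\widehat{A}\,\widehat{\ }\,\lc D$ (put every element of $\lc D$ above every element of $\lc A$, with the new ``apex'' of $\widehat A$ relabeled appropriately), take its order, and collect the set of its $(n+|A|+1)$-stems. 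The key observation making this well-defined is Lemma~\ref{lemma111201}/Corollary~\ref{cor_161201}: an order has $\widehat A$ as its unique $(|A|+1)$-stem exactly when it contains an $A$-break, and every stem of such an order of cardinality $>|A|+1$ decomposes as $A$ together with a stem of ``the future''. So the stems of $\widehat A\,\widehat{\ }\,\lc D$ above level $|A|+1$ are in canonical bijection with $A$ glued under the stems of $\lc D$, and the node data of $\Gamma_n$ is faithfully transported.

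First I would set up notation for this gluing operation at the level of labeled causets and check it descends to orders and to sets-of-orders; the content here is just bookkeeping with the definition of stem. Second, I would verify that the image lands in $\Lambda_A$: the image of the origin-level data is $\{\widehat A\}$ itself (its unique $(|A|+1)$-stem is $\widehat A$, by Corollary~\ref{cor_161201}, since $\widehat A$ trivially contains an $A$-break), and more generally every node in the image, having $\widehat A$ as a stem and every element of cardinality $|A|+1$ isomorphic to $\widehat A$, sits at or above $\{\widehat A\}$ in covtree. Third, I would check the map is a bijection onto $\Lambda_A$: injectivity is immediate because $\Gamma_n$ can be recovered from its image by the inverse gluing (strip off the bottom $A$ and the apex from every stem), and surjectivity follows because any node $\Delta$ at or above $\{\widehat A\}$ has, by the equivalence in Lemma~\ref{lemma111201} applied at every level, the form ``$A$ glued under something'', so it is the image of that something. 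Fourth, I would check that the map and its inverse are order-preserving — this is immediate from the fact that ${\mathcal O}_-$ commutes with the gluing operation (removing a minimal element from the future is the same whether or not $A$ is sitting underneath), so $\Gamma_n \prec \Gamma_m$ in covtree iff their images are $\prec$-related. Finally, convexity of $\Lambda_A$ in $\Lambda$: if $\Gamma \prec \Delta \prec \Gamma'$ with $\Gamma,\Gamma' \in \Lambda_A$, then $\Gamma$ already has $\widehat A$ among its stems and $\Delta$ is above $\Gamma$, hence $\Delta$ also has $\widehat A$ among its stems and every stem of $\Delta$ of cardinality $|A|+1$ is a stem of some certificate which contains an $A$-break, so is isomorphic to $\widehat A$; thus $\Delta \succeq \{\widehat A\}$ and $\Delta \in \Lambda_A$.

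To finish Lemma~\ref{cor_29111901} I would then observe that covtree contains infinitely many such $\Lambda_A$ — one for each finite order $A$, and these are genuinely different convex sub-orders (for instance they have different minimum elements $\{\widehat A\}$) — which is what ``self-similar'' requires by the definition given just above the lemma; but strictly the lemma only asserts that each individual $\Lambda_A$ is a copy of covtree, so the core of the argument is the isomorphism construction.

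I expect the main obstacle to be the careful verification that the gluing map is well-defined and bijective \emph{at the level of certificate-having sets of orders} rather than at the level of individual orders — i.e.\ that $\Gamma_n \in \Lambda$ (has a certificate) if and only if its glued image has a certificate, and that the image is exactly the set of $(n+|A|+1)$-stems of the glued certificate. This requires knowing that gluing $A$ underneath a certificate of $\Gamma_n$ produces a certificate of the glued node, and conversely that any certificate of a node in $\Lambda_A$ contains an $A$-break (so that stripping $A$ off yields a certificate of the preimage); both directions lean on the equivalence $(i)\Leftrightarrow(iii)$ of Lemma~\ref{lemma111201} together with the observation that an infinite order containing an $A$-break has its ``future'' well-defined as an infinite order whose stems are precisely the $A$-complements of the large stems of the original. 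Once that correspondence between certificates is nailed down, the order-isomorphism and convexity claims are routine.
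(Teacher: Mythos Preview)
Your plan is essentially the paper's proof: the gluing map you describe is the paper's $\mathcal{G}_A$ (Definition~\ref{defn_161201}), and your verification that it is an order-isomorphism via commutation with $\mathcal{O}_-$ is exactly the paper's argument. Two small points are worth tightening. First, there is an off-by-one muddle in your description: you say you glue $\widehat A$ beneath each order to obtain $(n+|A|+1)$-orders, yet you also say $\Gamma_1\mapsto\{\widehat A\}$, which sits at level $|A|+1$; the correct construction glues $A$ (not $\widehat A$) beneath each $B\in\Gamma_n$ to produce $(n+|A|)$-orders, and then indeed the root $\Gamma_1$ maps to $\{\widehat A\}$. Second, the paper defines $\mathcal{G}_A$ \emph{directly} on sets of orders --- $\mathcal{G}_A(\Gamma_n)$ is simply the set of orders having an $A$-break with future in $\Gamma_n$ --- rather than via a choice of certificate as you do; this makes well-definedness automatic and dissolves what you flag as the ``main obstacle''. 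Certificates then enter only to check that $\mathcal{G}_A(\Gamma_n)$ is a covtree node (glue $A$ under any certificate of $\Gamma_n$) and that the map surjects onto $\Lambda_A$ (any certificate of a node above $\{\widehat A\}$ contains an $A$-break; strip off $A$).
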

The following theorem is a corollary:
\begin{theorem}\label{theorem_05082001}
Covtree is self-similar.
\end{theorem}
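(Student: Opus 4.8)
The plan is to prove Lemma~\ref{cor_29111901} directly — that each $\Lambda_A$ is a convex sub-order of covtree order-isomorphic to covtree as a whole — and then deduce Theorem~\ref{theorem_05082001} (self-similarity) as an immediate corollary, since the singletons $\{\widehat{A}\}$ range over infinitely many distinct finite orders $A$, giving infinitely many distinct copies of covtree sitting inside covtree. So the real work is Lemma~\ref{cor_29111901}, and the rest of this sketch concerns that.

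First I would fix a finite order $A$ with $|A|=k$ and unpack what $\Lambda_A$ consists of. A node $\Gamma_m$ lies in $\Lambda_A$ iff $\Gamma_m \succeq \{\widehat{A}\}$ in covtree, i.e. $m \ge k+1$ and ${\mathcal O}_-^{\,m-(k+1)}(\Gamma_m)=\{\widehat{A}\}$. By Corollary~\ref{cor_161201}, this says exactly that a (equivalently every) certificate $C$ of $\Gamma_m$ has $\widehat{A}$ as its unique $(k+1)$-stem — equivalently, $C$ contains an $A$-break, with past (the unique $k$-stem) equal to $A$. So $\Lambda_A$ is, semantically, the part of covtree describing infinite orders that have an $A$-break; these are precisely the orders of the form ``$A$ below, then an arbitrary infinite order grown on top of everything in $A$.'' The key structural observation is that such an order is determined by $A$ together with an \emph{arbitrary} infinite order $D$ living on the ``future'' of the break, and conversely any infinite $D$ yields a legitimate certificate in $\Lambda_A$. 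This is the content that makes the renormalisation map well-defined.

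The heart of the proof is then to construct the order-isomorphism $\Phi_A : \Lambda \to \Lambda_A$ explicitly and verify it is a well-defined order-isomorphism onto a convex sub-order. Given a node $\Delta_n \in \Lambda$ (a set of $n$-orders with a certificate $D$), I would define $\Phi_A(\Delta_n)$ to be the set of all $(k+n)$-orders of the form ``$A$ with a representative of some $E\in\Delta_n$ stacked entirely above it,'' i.e. the set of $(|A|+n)$-stems of the order obtained by placing a certificate of $\Delta_n$ above everything in a representative of $A$. I would check: (1) $\Phi_A(\Delta_n)$ genuinely lies in $\Lambda$ — it has a certificate, namely $A$-with-$D$-on-top; (2) $\Phi_A(\Delta_n)\in\Lambda_A$, i.e. ${\mathcal O}_-$ iterated down $n$ times lands on $\{\widehat A\}$ (because removing the top $n$ levels of future structure, one at a time, eventually leaves $A$ plus a single element above all of $A$, which is $\widehat A$ — this uses that the future of a break sits strictly above the entire past, so the stem-restriction operation "peels" the future before it ever touches $A$); (3) $\Phi_A$ intertwines ${\mathcal O}_-$ with ${\mathcal O}_-$, hence is order-preserving, and in fact a bijection between $\Lambda$ and $\Lambda_A$ with order-preserving inverse (the inverse strips off the bottom $A$ and relabels); (4) $\Lambda_A$ is convex in $\Lambda$ — if $\{\widehat A\}\preceq \Gamma\preceq\Gamma'$ and $\Gamma'\in\Lambda_A$ then $\Gamma\in\Lambda_A$, which is automatic from ${\mathcal O}_-$ being a well-defined single-valued map down the tree.

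The main obstacle I expect is step (1)–(2): showing that the candidate ``$A$ with $D$ on top'' is actually a certificate of $\Phi_A(\Delta_n)$ and that its $m$-stems for $k< m\le k+n$ match the claimed nodes, which requires a careful induction on the level $m$ describing what an $m$-stem of ``$A$-below-$D$-above'' looks like: for $m\le k$ it is an $m$-stem of $A$, for $m=k+1$ it is forced to be $\widehat A$, and for $m>k+1$ it is $A$ together with an $(m-k)$-stem of $D$ sitting above all of $A$ — plus the nontrivial point that distinct orders $E, E'\in\Delta_n$ can collapse to the same stem only if $E\cong E'$, because stacking on top of a fixed $A$ is injective on order-isomorphism classes once everything is placed above $A$. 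Establishing this ``stacking is faithful'' fact, and that ${\mathcal O}_-\circ\Phi_A = \Phi_A\circ{\mathcal O}_-$, is where the bulk of the argument sits; the convexity and the passage to Theorem~\ref{theorem_05082001} are then formal.
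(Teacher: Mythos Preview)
Your approach is correct and is essentially identical to the paper's: your map $\Phi_A$ is precisely the paper's $\mathcal{G}_A$ (Definition~\ref{defn_161201}) restricted to $\Lambda$, and the paper proves Lemma~\ref{cor_29111901} by checking exactly your points (1)--(3), first showing $\mathcal{G}_A(\Lambda)=\Lambda_A$ and then using the commutation $\mathcal{G}_A\circ\mathcal{O}_-=\mathcal{O}_-\circ\mathcal{G}_A$ to establish the order-isomorphism. The theorem then follows as a corollary, just as you outline.
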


We will need the following definition:
\begin{definition}\label{defn_161201} Given a finite order $A$ and a set of orders $\Upsilon$, the map $\mathcal{G}_A$ takes $\Upsilon$ to $\mathcal{G}_A(\Upsilon)$, the set  of orders which contain a break with past $A$ and future $B\in\Upsilon$, i.e.
\\[-20pt]
\begin{equation*}
\mathcal{G}_A(\Upsilon):=\{C \ | \ C \text{ is an order which contains a break with past $A$ and future }B\in\Upsilon\}.\end{equation*}\end{definition}

Examples are shown in figure \ref{G_A_defn}. Note that $\Upsilon$ may contain finite orders, infinite orders, or both.
\begin{figure}[h]
	\centering
	\includegraphics[width=0.9\textwidth]{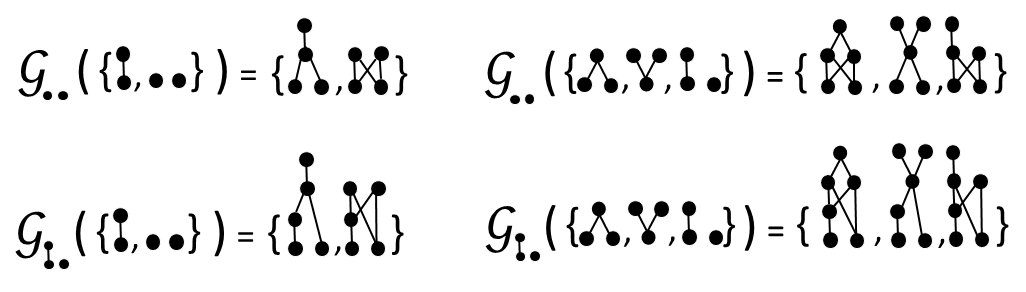}
	\caption{Illustration of the operation $\mathcal{G}_A$.}
	\label{G_A_defn}
\end{figure}

\begin{proof}[Proof of lemma \ref{cor_29111901}:] We will show that, for any finite order $A$,
\begin{enumerate}
\item[$(i)$] $\mathcal{G}_A(\Lambda)=\Lambda_A$, and
\item[$(ii)$] the map $\mathcal{G}_A:\Lambda\rightarrow\Lambda_A$ is an order-isomorphism,
\end{enumerate}
and the result follows. Note that, by definition \ref{defn_161201}, $\Lambda$ is a strict subset of the domain of $\mathcal{G}_A$. Here we use $\mathcal{G}_A$ to denote the restriction of the map to $\Lambda$. The use of $\mathcal{G}_A$ should be clear from the context.
 
To prove part $(i)$, we first show that $\mathcal{G}_A(\Lambda)\subseteq\Lambda_A$. Let $\Gamma_n\in\Lambda$ and let the $m$-order $C_m$ be a certificate of $\Gamma_n$. Let $D_k$ denote the order of cardinality $k=m+|A|$ which contains a break with past $A$ and future $C_m$. Then $D_k$ is a certificate of $\mathcal{G}_A(\Gamma_n)$, and therefore $\mathcal{G}_A(\Gamma_n)\in\Lambda$. Next, note that $\widehat{A}$ is the unique $|\widehat{A}|$-stem in every order in $\mathcal{G}_A(\Gamma_n)$, for any $\Gamma_n$. Therefore $\mathcal{G}_A(\Lambda)\subseteq\Lambda_A$.

Second, we show that $\Lambda_A\subseteq\mathcal{G}_A(\Lambda)$. Let $\Gamma_n\in\Lambda_A$ and let the $p$-order $E_p$ be a certificate of $\Gamma_n$. Necessarily, $E_p$ contains a break with past $A$ and some future $B$. Let $\Gamma_{l}$ denote the set of $l$-stems of $B$, where $l=n-|A|$. Then $\Gamma_{l}\in\Lambda$ and $\mathcal{G}_A(\Gamma_{l})=\Gamma_n$.

Therefore, $\mathcal{G}_A(\Lambda)=\Lambda_A$. 

To prove part $(ii)$, we use the commutativity of the operations ${\mathcal O}_{-}$ and $\mathcal{G}_A$ to show that $\mathcal{G}_A:\Lambda\rightarrow\Lambda_A$ is order-preserving. Suppose $\Gamma_n\prec \Gamma_{n+1}$, then by definition of covtree we have that $\Gamma_n={\mathcal O}_{-}(\Gamma_{n+1})$, and therefore
$$\mathcal{G}_A(\Gamma_n)=\mathcal{G}_A({\mathcal O}_{-}(\Gamma_{n+1}))={\mathcal O}_{-}(\mathcal{G}_A(\Gamma_{n+1})) \implies \mathcal{G}_A(\Gamma_n)\prec \mathcal{G}_A(\Gamma_{n+1}).$$ Now suppose $\mathcal{G}_A(\Gamma_n)\prec \mathcal{G}_A(\Gamma_{n+1})$. Then
\begin{equation}\begin{split} \mathcal{G}_A(\Gamma_n)={\mathcal O}_{-}(\mathcal{G}_A(\Gamma_{n+1}))=\mathcal{G}_A({\mathcal O}_{-}(\Gamma_{n+1}))\implies \Gamma_n\prec \Gamma_{n+1}.
\end{split}\end{equation}
\end{proof}

Covtree contains countably many copies of itself, each with ground set $\Lambda_A$ and root $\{\widehat{A}\}$ (where we can think of $\Lambda$ itself as $\Lambda_{\emptyset}$). An illustration is shown in figure~\ref{covtree_self_similarity}.

\begin{figure}[h]
	\centering
	\includegraphics[width=0.6\textwidth]{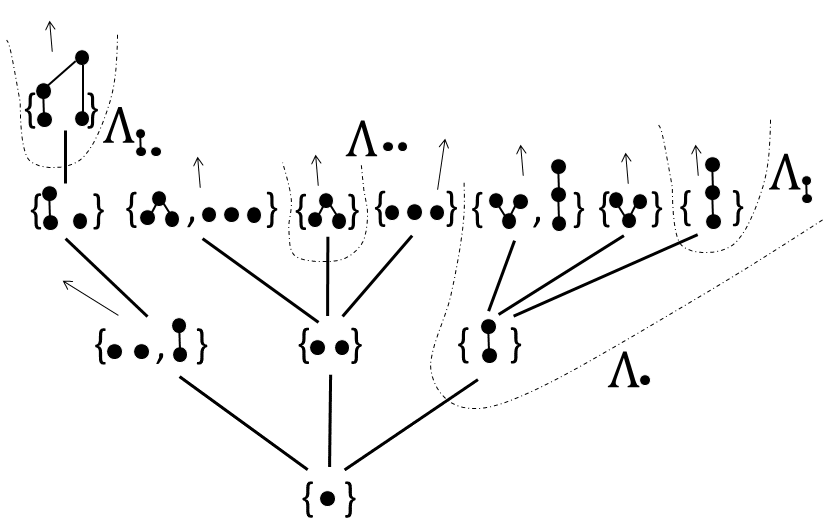}
	\caption{The self-similar structure of covtree. The figure displays the first two levels of covtree in full and selected nodes from levels 3 and 4. The arrows indicate additional nodes not shown in the figure. The dashed lines indicate where a new copy of covtree begins. The ground set of each copy is indicated next to each dashed line.}
	\label{covtree_self_similarity}
\end{figure}

\subsection{Covariant cosmic renormalisation}\label{subsec_151201}

In this section we give a brief review of cosmic renormalisation in CSG models, present its covariant counterpart and conclude with a discussion of open questions.

\paragraph{The labeled break condition:}
In CSG models, cosmic renormalisation comes about as a result of a conditioning. One conditions on an $\lc{A}$-break, where $\lc{A}$ is some labeled causet of choice. The transition probabilities which are not consistent with the break condition are set to zero, while the remaining transition probabilties are normalised to satisfy the Markov sum rule.

The break condition fixes the past but leaves the future unconstrained and dynamical. Consequently, one can conceive of the growth of the future as a new poscau walk with a new poscau dynamics -- the \textit{effective dynamics}, derived from the original dynamics via the break condition.

Remarkably, the effective dynamics is itself a CSG model and one can think of the coupling constants as undergoing a renormalisation: 
\begin{equation}\begin{split}\label{eq_181201}
&Q_{m,r}:\{t_k\}\rightarrow\{t^{(m,r)}_k\}\\
&t^{(m,r)}_0=\sum_{l=0}^{m-r}\binom{m-r}{l}t_{r+l}\\
&t^{(m,r)}_k=\sum_{l=0}^m\binom{m}{l}t_{k+l} \ \forall \ k>0,\end{split}\end{equation}
where $\{t_k^{(m,r)}\}$ is the set of renormalised couplings, $m=|\lc{A}|$ and $r$ is the number of maximal elements of $\lc{A}$. The composite label $m,r$ on the renormalisation transformation, $Q_{m,r}$, and on the renormalised couplings, $t_k^{(m,r)}$, signifies that the transformation depends on $\lc{A}$ only via these two quantities. This is an attractive feature of CSG models: the causal structure of the past is ``forgotten'' by the effective dynamics, providing still further motivation for regarding the past and the future of the break as separated. A derivation of transformation \ref{eq_181201} can be found in \cite{Dowker:2017zqj}.

A sequence of breaks corresponds to a sequence of applications of the transformation $Q_{m,r}$ with the values of $m$ and $r$ varied appropriately. In this way a flow is generated in the space of couplings.

A stationary point is a dynamics which is mapped onto itself.
The unique family of stationary points of $Q_{m,r}$ is given by $t_0=(1+t)^{-r}t^r$ and $t_k=t^k$ for $k>0$, where $t$ is any positive real number (or equivalently, $t_0=1$ and $t_k=(1+t)^rt^{k-r}$ for $k>0$). Since the stationary points depend on $r$ but not on $m$, a pair of transformations, $Q_{m,r}$ and $Q_{n,s}$, have either no stationary points in common (if $s\not=r$) or have exactly the same set of stationary points (if $s=r$).

\paragraph{The covariant break condition:}
What form does cosmic renormalisation take on covtree? First, one conditions on an $A$-break --- this is the covariant break condition. By theorem \ref{lemma131201}, the condition constrains the covtree walk to pass through the node $\{\widehat{A}\}$ but leaves the walk unconstrained thereafter. Since we do not (yet) know how to encode a covtree dynamics as a set of couplings, there are no couplings to undergo a renormalisation. Nevertheless, there is an effective covtree dynamics which governs the growth of the future, and the associated transformation acts on the transition probabilities directly. A covtree dynamics $\{\mathbb{P}\}$ and its corresponding effective dynamics $\{\mathbb{P}_A\}$ are related via the transformation $R_A$:
\begin{equation}\begin{split}\label{eq2507201}
&R_A:\{\mathbb{P}\}\rightarrow\{\mathbb{P}_A\}\\
&\mathbb{P}_A(\Gamma_n\rightarrow\Gamma_{n+1})=\mathbb{P}(\mathcal{G}_A(\Gamma_n)\rightarrow\mathcal{G}_A(\Gamma_{n+1})),
\end{split}\end{equation}
where $\mathcal{G}_A$ is the mapping introduced in definition \ref{defn_161201}.
 For a generic covtree dynamics, the functional relationship between $\{\mathbb{P}\}$ and $\{\mathbb{P}_A\}$ can depend on any feature of $A$, and this is signified by the label $A$ on the transformation, $R_A$, and on the effective transition probabilities, $\{\mathbb{P}_A\}$.

Let us sketch the derivation of transformation \ref{eq2507201}. Recall that every covtree dynamics $\{\mathbb{P}\}$ is equivalent to a measure space $(\Omega,\mathcal{R}(S),\mu)$, where $\mu(cert(\Gamma_m))=\mathbb{P}(\Gamma_m)$. Let  $(\Omega,\mathcal{R}(S),\mu_A)$ denote the measure space equivalent to the effective dynamics $\{\mathbb{P}_A\}$. Then the effective dynamics is defined by:
\begin{equation}\label{eq2507202}
\mu_A(\mathcal{E}):=\frac{\mu(\mathcal{G}_A(\mathcal{E}))}{\mu(cert(\{\widehat{A}\}))} \ \forall \ \mathcal{E}\in\mathcal{R}(S).
\end{equation}
Now, choose $\mathcal{E}=cert(\Gamma_n)$ for some node $\Gamma_n$. This sets $\mathcal{G}_A(\mathcal{E})=\mathcal{G}_A(cert(\Gamma_n))=cert(\mathcal{G}_A(\Gamma_n))$. Then, use the relations $\mu(cert(\Gamma_m))=\mathbb{P}(\Gamma_m)$ and $\mu_A(cert(\Gamma_m))=\mathbb{P}_A(\Gamma_m)$ to rewrite condition \ref{eq2507202} in terms of probabilities. Finally, we can use induction to reach transformation \ref{eq2507201}.

The transformation $R_A$ acts directly on transition probabilities, not on a set of couplings. To emphasis this, we call $R_A$ a \textit{similarity} transformation rather than a \textit{renormalisation} transformation. If (as one hopes) in future we are able to characterise a covtree dynamics by a set of couplings, it may be possible to write the similarity transformation as a renormalisation transformation which acts on the couplings directly.

Similarly, we reserve the term \textit{stationary point} for a set of couplings which is mapped onto itself by the renormalisation transformation. If a covtree dynamics is mapped onto itself by a similarity transformation $R_A$, we say that it is \textit{self-similar} with respect to $R_A$. A covtree dynamics $\{\mathbb{P}\}$ is self-similar with respect to $R_A$ if and only if it satisfies the condition
\begin{equation}\label{eq_181207}\mathbb{P}(\Gamma_n\rightarrow\Gamma_{n+1})=\mathbb{P}(\mathcal{G}_A(\Gamma_n)\rightarrow\mathcal{G}_A(\Gamma_{n+1}))\end{equation} for every $n$ and every transition $\Gamma_n\rightarrow\Gamma_{n+1}$.
Constructing a self-similar dynamics is simple: assign any set of transition probabilities to the transitions which lie outside $\Lambda_A$, and then use equality \ref{eq_181207} to set the transition probabilities in $\Lambda_A$.  

It is possible to use this procedure to fix the transition probabilities in $\Lambda_A$ for every $A$ simultaneously, thus constructing a dynamics which is self-similar with respect to $R_A$ for all $A$. We call such dynamics \textbf{maximally self-similar}.

A dynamics cannot be self-similar with respect to a unique transformation $R_A$. If a dynamics is self-similar with respect to some transformation $R_A$ then it is also self-similar with respect to $(R_A)^n$, for any positive integer $n$. But $(R_A)^n$ is itself a similarity transformation: $(R_A)^n=R_{A^n}$, where we define $A^n$ to be the order of cardinality $n|A|$ which is a stack of $n$ copies of $A$ separated by breaks. Therefore there exists no dynamics which is self-similar with respect to a unique transformation. We say that a dynamics $\{\mathbb{P}\}$ is \textbf{minimally self-similar} if there exists a unique order $A$ such that $\{\mathbb{P}\}$ is only self-similar with respect to $(R_A)^n$ for all $n$.

Some self-similar dynamics are neither minimally nor maximally self-similar. Consider a (finite or infinite) collection $\{\widehat{A}\},\{\widehat{B}\},...$ of covtree nodes. Does there exist a dynamics which is only self-similar with repsect to $R_A, R_B,...$ and their respective powers? If every pair of nodes are unrelated in covtree then the answer is yes. On the other hand, suppose that $\{\widehat{A}\}\prec \{\widehat{B}\}$. Then $B$ contains an $A$-break, and let us denote the future of the break by $D$. Then $\mathcal{G}_B=\mathcal{G}_A\mathcal{G}_D$, and a dynamics is self-similar with respect to $R_A$ and $R_B$ if and only if it is self-similar also with respect to $R_D$.

\paragraph{The post condition:}
So far we have discussed the renormalisation associated with breaks. A similar story applies to posts. One conditions on an $\lc{A}$-post in the labeled case, or an $A$-post in the covariant case. Since a post condition is a special case of a break condition, the past is fixed and the future remains dynamical, allowing for a description in terms of an effective dynamics. In the labeled case, the $\lc{A}$-post condition is equivalent to the $\widehat{\lc{A}}$-break condition. Therefore the corresponding renormalisation transformation is obtained from transformation \ref{eq_181201} via the replacements $r\rightarrow 1$ (since $\widehat{\lc{A}}$ has a single maximal element) and $m\rightarrow|\widehat{\lc{A}}|=|\lc{A}|+1$, \textit{i.e.} $Q_{|\widehat{\lc{A}}|,1}$. In the covariant case, the $A$-post condition is equivalent to the $\widehat{A}$-break condition, and the effective dynamics is given by transformation $R_{\widehat{A}}$, obtained from transformation $\ref{eq2507201}$ via $A\rightarrow\widehat{A}$. 

There is an alternative formulation of the effective dynamics after a post \cite{Martin:2000js}. In this alternative formulation, the post is considered a part of the future rather than the past. The future is therefore constrained to have a unique minimal element (the post itself) but is otherwise dynamical. The effective dynamics is not a CSG model. Instead it is an \textit{originary} CSG model, ensuring that every new element is born above the post. The coupling constants renormalise as:
\begin{equation}\begin{split}\label{eq_171201}
&S_m:\{t_k\}\rightarrow\{t^{(m)}_k\}\\
&t^{(m)}_0=0\\
&t^{(m)}_k=\sum_{l=0}^m\binom{m}{l}t_{k+l}  \ \forall \ k>0,\end{split}\end{equation}
where $m=|\lc{A}|$. The renormalisation transformation depends on $\lc{A}$ only via its cardinality, as signified by the label $m$ on the transformation, $S_m$, and on the renormalised couplings, $t_k^{(m)}$. A derivation of transformation \ref{eq_171201} can be found in \cite{Martin:2000js}. The stationary points of $S_m$, for any $m$, are the Originary Transitive Percolation (OTP) models:
\begin{equation}\begin{split}
&t_0=0\\
&t_k=t^k \ \forall \ k>0, \end{split}\end{equation}
where $t$ is any positive real number.

An originary formulation exists also in the covariant case. We say that a covtree dynamics $\{\mathbb{P}\}$ is originary if $\mathbb{P}(\Gamma_1\rightarrow$\{\twoch\}$)=1$. In the originary viewpoint of the $A$-post condition, a (generic) covtree dynamics $\{\mathbb{P}\}$ is mapped onto an \textit{originary} covtree dynamics $\{\mathbb{P}'_A\}$ via the transformation\footnote{The apostrophe on the transition probabilities $\{\mathbb{P}'_A\}$ is used to distinguish between the images of $\{\mathbb{P}\}$ under $R_A$ and $T_A$.}:
\begin{equation}\label{eq2407201}
\begin{split}
&T_A:\{\mathbb{P}\}\rightarrow\{\mathbb{P}'_A\}\\
&\mathbb{P}'_A(\Gamma_1\rightarrow\text{ \{\twoch\} })=1\\
&\mathbb{P}'_A(\Gamma_n\rightarrow\Gamma_{n+1})=\mathbb{P}(\mathcal{G}_A(\Gamma_n)\rightarrow\mathcal{G}_A(\Gamma_{n+1})) \ \forall \  \Gamma_n \succeq \text{ \{\twoch\} }\\
&\mathbb{P}'_A(\Gamma_n\rightarrow \Gamma_{n+1})=0 \ \text{otherwise.}
\end{split}
\end{equation} 

\paragraph{Covariant cosmology and physical dynamics:}\label{par_1} We have seen that the covariant counterpart of cosmic renormalisation takes the form of a family of transformations, each of which maps a given covtree dynamics into another. A summary of the covariant and the labeled transformations is shown in table \ref{table1}.

While our success in adapting the cosmic renormalisation to the covtree framework bodes well for a covariant causal set cosmology, our results will remain purely formal until we are able to identify a class of covtree dynamics to work with. Having said that, these results could be used to advance the search for physical covtree dynamics. In the remainder of this section, we present directions for further study.

The cosmic transformations can be used to study the relationship between covtree dynamics and CSG dynamics:
\begin{itemize}
\item[1.] We have seen that the labeled break transformation $Q_{m,r}$ (\textit{i.e.} transformation~\ref{eq_181201}) depends on the past of the break only via its cardinality and number of maximal elements. Is the condition on a covtree dynamics $\{\mathbb{P}\}$ that $\{\mathbb{P}_A\}=\{\mathbb{P}_B\}$ if and only if $A$ and $B$ have the same cardinality and number of maximal elements necessary for $\{\mathbb{P}\}$ to be a CSG dynamics? Is it sufficient?

\item[2.] The action of $Q_{m,r}$ on the couplings $t_k$ with $k>0$ can be factorised as $Q_{m,r}=M^m(t_k)$, where $M(t_k)=t_k+t_{k+1}$. Does this property bear any relation to the constraint on a covtree dynamics $\{\mathbb{P}\}$ that, for any finite order $A$, the renormalisation transformation can be factorised as $R_A=R^{|A|}$ for some transformation $R$?
\end{itemize}

A comparison of stationary points (in the labeled case) and self-similar dynamics (in the covariant case) also has the potential to shed light on the covariant form of CSG models:
\begin{enumerate}
\item[3.] If a CSG model $\{t_k\}$ is a stationary point of a labeled transformation, is its corresponding covtree dynamics self-similar?

\item[4.] We have seen that a CSG dynamics is a stationary point of both $Q_{m,r}$ and $Q_{n,s}$ if and only if $r=s$. Is the condition on a covtree dynamics $\{\mathbb{P}\}$ that $\{\mathbb{P}\}=\{\mathbb{P}_A\}=\{\mathbb{P}_B\}$ only if $A$ and $B$ have the same number $r$ of maximal elements necessary for $\{\mathbb{P}\}$ to be a CSG dynamics? Is it sufficient? Such a dynamics is neither maximally nor minimally self-similar, and it follows from our previous analysis that for any $r>0$ there exists a family of self-similar dynamics which satisfy this condition. Are these CSG dynamics, or is the relationship between the labeled and covariant formulations more complex?
\end{enumerate}

Covtree dynamics which flow to a self-similar dynamics (cf. family $(c)$) are also of interest:
\begin{enumerate}
\item[5.]  Let us consider the post condition in the originary formulation. A covtree dynamics $\{\mathbb{P}\}$ flows to a self-similar dynamics under $T_A$ if
\begin{equation}\label{eq_14121907}(T_A)^k[\mathbb{P}(\Gamma_n\rightarrow\Gamma_{n+1})]\rightarrow (T_A)^{k+1}[\mathbb{P}(\Gamma_n\rightarrow\Gamma_{n+1})] \ \text{ as } \  k\rightarrow\infty.\end{equation}
If $\{\mathbb{P}\}$ arrives at a self-similar dynamics after $N$ applications of $T_A$, expression~\ref{eq_14121907} simplifies to:
\begin{equation}\label{eq_29072001}\begin{split}&(T_A)^k[\mathbb{P}(\Gamma_n\rightarrow\Gamma_{n+1})]=(T_A)^{k+1}[\mathbb{P}(\Gamma_n\rightarrow\Gamma_{n+1})] \ \forall \  k\geq N\\
\implies &\mathbb{P}(\mathcal{G}_A^{k}(\Gamma_n)\rightarrow\mathcal{G}_A^{k}(\Gamma_{n+1}))=\mathbb{P}(\mathcal{G}_A^{k+1}(\Gamma_n)\rightarrow\mathcal{G}_A^{k+1}(\Gamma_{n+1})) \ \forall \  k\geq N.
\end{split} \end{equation}

The $N=1$ case is of special interest to us. The Transitive Percolation (TP) models are a 1-parameter family of CSG models, defined by $t_0=1$, $t_k=t^k$, $t\in\mathbb{R}^+$. It is easy to show that, under an application of $S_m$, a TP model with parameter $t$ is maped onto the OTP model with the same $t$ value. Does this mean that for a covtree dynamics $\{\mathbb{P}\}$ to be a TP model it must satisfy condition \ref{eq_29072001} with $N=1$?
\end{enumerate}

Finally, covtree is an opportunity to uncover new dynamics with physical features such as:
\begin{enumerate}
\item[6.] Infinitely many breaks or posts: it is known that OTP gives rise to infinitely many posts with unit probability \cite{Alon:1994}. Is this property related to that the fact that it is a stationary point? Do self-similar covtree dynamics give rise to an infinite sequence of posts or breaks? Could this be a feature of the maximally self-similar dynamics?

\item[7.] Causality: when the transformation $R_A$ factorises as ${R_A=R^{|A|}}$, the effective dynamics is independent of the causal structure of the past. Therefore, could the condition that $R_A$ factorises be interpreted as a causality condition on covtree dynamics?
\end{enumerate}

\begin{table}[h!]
\centering
\begin{tabular}{|c|| c| c |c|} 
 \hline
& \textbf{break} & \multicolumn{2}{|c|}{\textbf{post}}  \\
 &   & \textit{non-originary} & \textit{originary} \\
 \hline \hline
 \textbf{labeled}   & $Q_{m,r}$    &$Q_{m+1,1}$& $S_m$\\
  \hline
 \textbf{covariant}&   $R_A$  & $R_{\widehat{A}}$  &$T_A$\\
  \hline
\end{tabular}
\caption{Summary of transformations. The first row lists the renormalisation transformations which act on CSG couplings. $Q_{m,r}$ is the $\lc{A}$-break transformation, where $m=|\lc{A}|$ and $r$ is the number of maximal elements of $\lc{A}$. $Q_{m+1,1}$ and $S_m$ are the $\lc{A}$-post transformations in the non-originary and originary formulations, respectively. The second row lists the similarity transformations which act on covtree transition probabilities. $R_A$ is the $A$-break transformation. $R_{\widehat{A}}$ and $T_A$ are the $A$-post transformations in the non-originary and originary formulations, respectively.}
\label{table1}
\end{table}

\section{Further structure of covtree}

A pair of challenges on the path to physical covtree dynamics are understanding the structure of covtree and understanding the relationship between paths and their certificates. In this section we present further properties of covtree and its certificates, and illustrate with a toy example how an understanding of the structure of covtree could be a useful tool for constraining covtree dynamics.

\subsection{Nodes}
In this section we list properties which pertain to nodes, including criteria for a set of $n$-orders to be a node, properties of minimal certificates and a study of direct descendants and valency.

We start with the simple property:
\begin{property}
For any finite order $C$ there is a singleton node $\{C\}$ in covtree, and $C$ is the unique minimal certificate of $\{C\}$.
\end{property}

Recall that $\Gamma_{m}\succ\Gamma_n$ in covtree if and only if every certificate of $\Gamma_m$ is a certificate of $\Gamma_n$. Therefore $\{C\}\succ\Gamma_n$ in covtree if and only if $C$ is a certificate of $\Gamma_n$. Since every node in covtree has countably many finite certificates, we find that:
\begin{property} Every node in covtree has countably many singleton descendants. \end{property}

Next we note that, since $C$ is the only $|C|$-stem in the covering order $\widehat{C}$, the node $\{\widehat{C}\}$ is directly above $\{C\}$ in covtree and therefore:
\begin{property} \label{hat}
Every singleton has at least one direct descendant which is a singleton.
\end{property}

Moreover,
\begin{property} \label{unique_sing_above_sing_property}
If $\{\widehat{C}\}$ is the only singleton directly above $\{C\}$ then $\{\widehat{C}\}$ is the only node directly above $\{C\}$.\end{property}
To see this, assume for contradiction that there exists some node $\{A^1, ...,A^k\}$ which is directly above $\{C\}$, where $k\geq 2$ . It follows from the definition of covtree that $C$ is the unique $|C|$-stem in every $A^i\in\{A^1, ...,A^k\}$ and therefore each of the nodes $\{A^1\}, ..., \{A^k\}$ are singletons directly above $\{C\}$, which is a contradiction.

Every singleton with valency greater than one has at least one direct descendant which is a doublet since:
\begin{property}\label{prop_22060201}
If $\{D_{n+1}\}\succ\{C_n\}$ and $D_{n+1}\not=\widehat{C_n}$ then $\{\widehat{C_n},D_{n+1}\}\succ \{C_n\}$.
\end{property}

A corollary of properties \ref{unique_sing_above_sing_property} and \ref{prop_22060201} is:
\begin{property}
No singleton has a valency of 2.
\end{property}

Singletons which possess property \ref{unique_sing_above_sing_property} are the only nodes in covtree which have exactly one direct descendant since: 
\begin{property} \label{valency}
Only singletons can have exactly one direct descendant in covtree.
\end{property}
To prove this, consider the node $\Gamma_n=\{A^1,...,A^k\}$ with $k\geq2$, and assume for contradiction that $\Gamma_{n+1}$ is the only direct descendant of $\Gamma_{n}$.

First, we show that $\Gamma_{n+1}$ contains the covering order $\widehat{A^i}$ of every $A^i\in \Gamma_n$. Suppose for contradiction that $\Gamma_{n+1}$ does not contain the covering order $\widehat{A^i}$ of some $A^i \in \Gamma_n$. Let $C_m$ be an $m$-order that is a certificate of $\Gamma_{n+1}$. Then there exists an $(m+1)$-order $D_{m+1}$ which contains $C_m$ and $\widehat{A^i}$ as stems (a representative of $D_{m+1}$ can be constructed by taking a representative of $C_m$ and adding an element above the stem which is isomorphic to a representative of $A^i$). $D_{m+1}$ is a certificate of $\Gamma'_{n+1}=\Gamma_{n+1}\cup \{\widehat{A^i}\}$, and $\Gamma'_{n+1}\succ\Gamma_n$. This contradicts the assumption that $\Gamma_{n+1}$ is the only direct descendant of $\Gamma_{n}$ . Hence $\Gamma_{n+1}$ contains the covering order $\widehat{A^i}$ for all $A^i\in \Gamma_n$.

Now we show that, since $\Gamma_{n+1}$ contains the covering order $\widehat{A^i}$ of some $A^i\in \Gamma_n$, it cannot be the only direct descendant of $\Gamma_{n}$. Let the $p$-order $E_p$ be a minimal certificate of $\Gamma_n$. Therefore $\Gamma_n\prec \Gamma_{n+1}\prec\{E_p\}$ and hence $E_p$ is a certificate of $\Gamma_{n+1}$. Then there exists a $(p-1)$-order $F_{p-1}$ such that $F_{p-1}$ is a stem in $E_p$ \textit{and} $\widehat{A^i}$ is not a stem in $F_{p-1}$  (a causet isomorphic to a representative of $F_{p-1}$ can be constructed by taking a representative of $E_p$ and removing the element which is maximal in the stem isomorphic to a representative of $A^i$). Then $F_{p-1}$ is a certificate of $\Gamma_n$, which contradicts the assumption that $E_p$ is a minimal certificate of $\Gamma_n$. Therefore, only singletons can have exactly one direct descendant in covtree.

Additionally, 
\begin{property}\label{prop_crown} For any $k\geq 1$ there is a singleton $\{C\}$ in covtree with $k$ singletons directly above it. \end{property}
An immediate corollary is that the valency of singletons is unbounded. (Note that $k$ is not the valency of $\{C\}$, for if $k\geq 2$ then $\{C\}$ has additional direct descendants which are not singletons, cf. property \ref{prop_22060201}.)

An example of a singleton node with 1 singleton directly above it is $\Gamma_4=\{$\twoch \twoch $\}$. 
To show that the statement is true for $k>1$, we construct a countable sequence of singletons $\{C_{n_2}\},\{C_{n_3}\},...,\{C_{n_k}\},...$ such that $\{C_{n_k}\}$ has $k$ singletons directly above it. Figure \ref{singdesc_new} shows the first three singletons in the sequence and their respective singleton descendants.

\begin{figure}[p]
    \includegraphics[width=.85\textwidth]{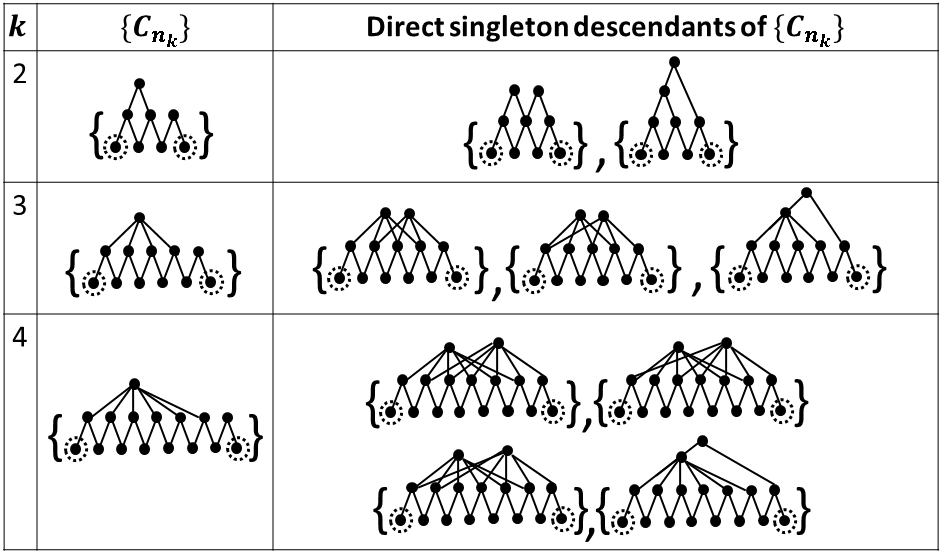}
  \caption{Illustration of property \ref{prop_crown}. The elements circled by a dotted line are identified with each other.}\label{singdesc_new}
\end{figure}
\begin{figure}[p]
    \includegraphics[width=.9\textwidth]{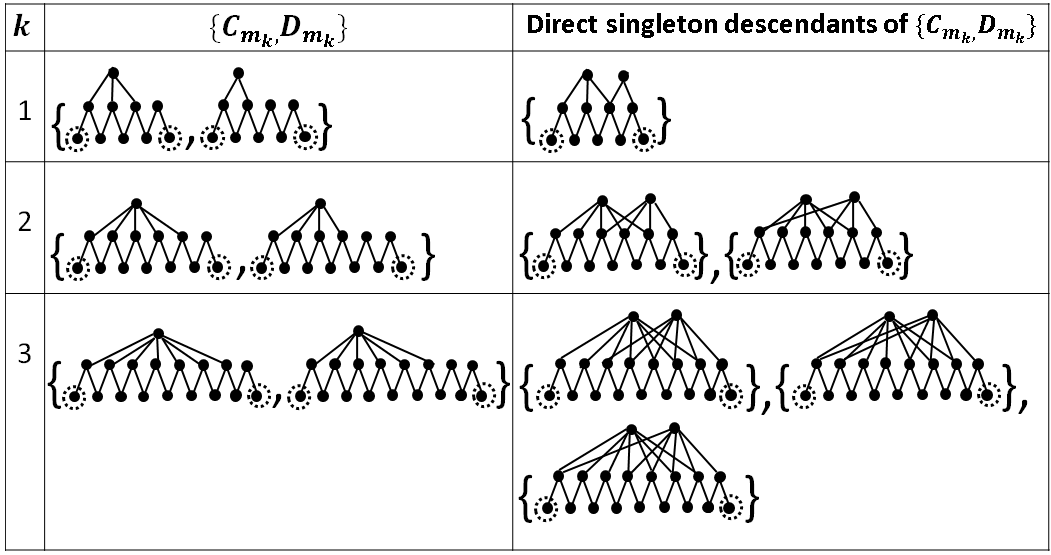}
   \caption{Illustration of property \ref{prop_200801}. The elements circled by a dotted line are identified with each other.}\label{fig_060904b}
\end{figure}

Let us explain the pattern shown in figure \ref{singdesc_new}. Each order $C_{n_k}$ has cardinality $n_k=4k-1$. A representative $\lc{C}_{n_k}$ of $C_{n_k}$, contains $\frac{n_k-1}{2}$ elements in level 1, $\frac{n_k-1}{2}$ elements in level 2, and a single element in level 3. Each element in level 1 is below two elements in level 2. Each element in level 2 is above two elements in level 1. The element in level 3 is above all but one of the elements in level 2.

We construct the singleton descendants of $\{C_{n_k}\}$ as follows. Consruct a new causet from $\lc{C}_{n_k}$ by adding a new element directly above all but one of the level 2 elements subject to the constraint that no level 2 element is maximal in the resulting causet. There are $2k-2$ ways to do this, leading to a collection of $2k-2$ causets. One can show that each of these causets is order-isomorphic to exactly one other in the collection, and taking the corresponding orders gives $k-1$ distinct $(n_k+1)$-orders whose only $n_k$-stem is $C_{n_k}$. A singleton containing each of these orders is directly above $\{C_{n_k}\}$. The additional singleton descendant is $\{\widehat{C_{n_k}}\}$.

Similarly,
\begin{property}\label{prop_200801} For any integer $k\geq 1$ there exists a doublet in covtree with $k$ singletons directly above it. \end{property}

One can construct an infinite sequence of doublets, $\{C_{m_1},D_{m_1}\},\{C_{m_2},D_{m_2}\},...,$ $\{C_{m_k},D_{m_k}\},...$, such that the $k^{th}$ doublet in the sequence has $k$ singletons directly above it. Figure \ref{fig_060904b} shows the first three doublets in the sequence and their direct singleton descendants.

A representative $\lc{C}_{m_k}$ of $C_{m_k}$ has cardinality $m_k=4k+5$ and partial ordering as described below property \ref{prop_crown} (with $n_k$ replaced by $m_k$). A representative of $D_{m_k}$ has these same properties, except that the element in level 3 is above all but \textit{two} of the elements in level 2. The two elements missed out must have a common element in their pasts.

We construct the singleton descendants of $\{C_{m_k},D_{m_k}\}$ as follows. Consruct a new causet from $\lc{C}_{m_k}$ by adding a new element directly above all but two of the level 2 elements subject to the constraints that no level 2 element is maximal in the resulting causet \textit{and} that the two elements which are missed out have a common element in their pasts. In this way, one generates a collection of $2k$ causets. One can show that each of these causets is isomorphic to exactly one other in the collection, and taking the corresponding orders gives $k$ distinct $(m_k+1)$-orders whose set of $m_k$-stems is $\{C_{m_k},D_{m_k}\}$. This proves property \ref{prop_200801}.

A key hurdle in the construction of covtree is understanding which sets of $n$-orders are covtree nodes. The following property gives a necessary condition in the case of doublets:
\begin{property}\label{prep_1307201}
$\{A_n,B_n\}$ is a doublet in covtree only if there exists an $(n-1)$-order $S$ which is a stem in both $A_n$ and $B_n$.
\end{property}

To prove property \ref{prep_1307201}, let $\lc{E}$ be a labeled minimal certificate of $\{A_n,B_n\}$. Let $\lc{A}_n$ and $\lc{B}_n$ be stems in $\lc{E}$ which are isomorphic to representatives of $A_n$ and $B_n$, respectively. Define $\lc{S}:=\lc{A}_n\cap \lc{B}_n$. We will show that $|\lc{S}|=n-1$ and property \ref{prep_1307201} follows.

Note that $\lc{E}=\lc{A}_n \cup \lc{B}_n$, for otherwise $\lc{E}$ would not be minimal. Define $k:=n-|\lc{S}|$ and suppose for contradiction that $1<k\leq n$. Let $\lc{A}_n\setminus \lc{B}_n=\{v_1,v_2,...,v_k\}$ and $\lc{B}_n\setminus \lc{A}_n=\{w_1,w_2,...,w_k\}$. Without loss of generality, let $\lc{F}=\lc{S} \cup \{v_1, v_2,..., v_{k-1}, w_1\}$ be an $n$-stem in $\lc{E}$. Then either $\lc{F}\cong \lc{A}_n$ or $\lc{F}\cong \lc{B}_n$. Suppose $\lc{F}\cong \lc{A}_n$. Then $\lc{E}\setminus \{v_k\}$ is isomorphic to a labeled certificate of $\{A_n,B_n\}$ and is a stem in $\lc{E}$, which contradicts the assumption that $\lc{E}$ is minimal. Suppose $\lc{F}\cong \lc{B}_n$. Then $\lc{E}\setminus \{w_2, w_3, ..., w_k\}$ is isomorphic to a labeled a certificate of $\{A_n,B_n\}$ and is a stem in $\lc{E}$, which is again a contradiction. Hence $|\lc{S}|=n-1$, which completes the proof.

Property \ref{prop_270801} is a corollary:
\begin{property}\label{prop_270801}
If $\Gamma_n$ is a doublet in covtree then all minimal certificates of $\Gamma_n$ are $(n+1)$-orders.
\end{property}
Therefore, if $\Gamma_n$ is a doublet in covtree and $\Gamma_n\prec\Gamma_{n+1}$ then $\Gamma_{n+1}$ contains some minimal certificate of $\Gamma_n$. It is a corollary of properties \ref{prop_200801} and \ref{prop_270801} that for any integer $k\geq 1$ there exists a doublet in covtree with $k$ minimal certificates.

\subsection{Paths}
In this section, we present properties of certain covtree paths and their certificates.
\begin{property} In covtree, there are infinite upward-going paths from the origin in which every node is a singleton.\end{property}
We call the subset of covtree which contains exactly all these paths \textbf{singtree} -- a tree of singletons. Figure \ref{singtree} shows the first three levels of singtree.

\begin{figure}[h]
  \centering
	\includegraphics[width=0.6\textwidth]{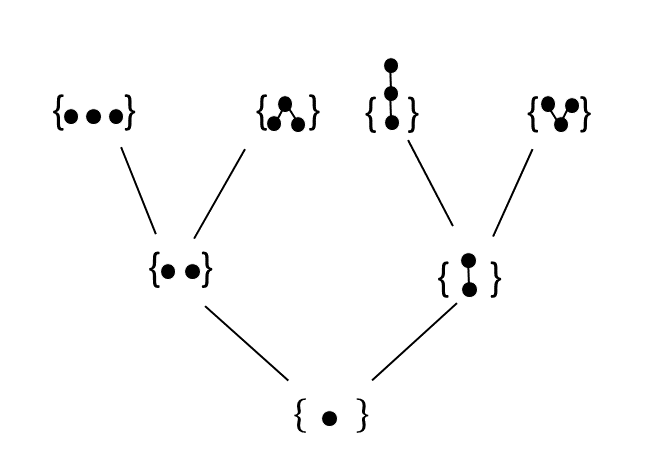}
	\caption{The first three levels of singtree.}
	\label{singtree}
\end{figure}

To discuss singtree we will need the concept of the Newtonian order. A \textbf{Newtonian causet} is a causet in which every element in level $k$ is above every element in level $k-1$. A \textbf{Newtonian order} is an order whose representatives are Newtonian.

In a Newtonian causet, every pair of elements which are unrelated have the same past and the same future, alluding to a notion of a Newtonian global time, hence its name\footnote{A Newtonian order is \textit{not} a good approximation of continuum Euclidean space.}. A Newtonian causet is a ``stack of antichains'', and for any natural number $N$, the union of the first $N$ levels is a past of a break. The local finiteness condition implies that every level whose elements are not maximal must be finite.

To characterise the nodes of singtree and the certificates of singtree paths we will need the following lemma about Newtonian orders:
\begin{lemma}\label{lemma_050902} The following properties of a finite or infinite order, $C$, are equivalent:
\begin{itemize}
\item[(i)] $C$ has a unique representative,
\item[(ii)] for every natural number $n\leq |C|$ there is a unique $n$-order which is a stem in $C$,
\item[(iii)] $C$ is Newtonian.
\end{itemize}
\end{lemma}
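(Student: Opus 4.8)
\textbf{Proof proposal for Lemma \ref{lemma_050902}.}

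The plan is to prove the chain of implications $(iii)\Rightarrow(ii)\Rightarrow(i)\Rightarrow(iii)$, which gives the full equivalence. The implication $(iii)\Rightarrow(ii)$ is the structural heart: I would argue that in a Newtonian order, the $n$-stem is determined by how the $n$ elements distribute across levels, and that this distribution is forced. Concretely, a stem is downward-closed, so if a stem contains an element $x$ in level $k$ it must contain every element below $x$, hence (since the causet is Newtonian) every element in levels $1,\dots,k-1$. Thus an $n$-stem consists of some number of complete lower levels together with a partial top level; but since all elements in a given level of a Newtonian causet are order-equivalent (same past, same future within the stem), the isomorphism type of the stem depends only on the level sizes and on $n$, not on which elements of the top level were chosen. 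It remains to check that the level-size profile itself is an invariant of $C$ — which it is, since levels are defined order-theoretically — so for each $n\le|C|$ there is exactly one $n$-order arising as a stem. (One should be slightly careful when $C$ is infinite and some level is infinite: local finiteness forces every non-maximal level to be finite, so the "number of complete lower levels" in an $n$-stem is still well-defined; this edge case is the one place to be attentive.)

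For $(ii)\Rightarrow(i)$: suppose for every $n\le|C|$ there is a unique $n$-order that is a stem in $C$. Two representatives $\lc{C}$ and $\lc{C}'$ of $C$ are certainly order-isomorphic as orders (both represent $C$), so this implication is about labelings: I would show that the natural labeling is forced. Build an isomorphism level by level, or rather stem by stem: the unique $1$-stem, unique $2$-stem, etc., together with the nesting ${\mathcal O}_{-}$-relation between consecutive stems, pin down the order type of each initial segment uniquely, and uniqueness at every stage means there is no branching choice, so $C$ admits essentially one way to be built up — hence a unique representative. Here "unique representative" should be read in the sense the paper uses elsewhere (unique up to the natural labeling), and I would make that precise by induction on $n$: the unique $n$-stem being order-determined forces the relative order of element $n$ against $\{0,\dots,n-1\}$.

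For $(i)\Rightarrow(iii)$, I would argue the contrapositive: if $C$ is not Newtonian, produce two non-isomorphic-as-labeled-causets representatives, i.e.\ show the representative is not unique. If $C$ is not Newtonian there is a level $k$ and an element $y$ in level $k$ together with an element $x$ in level $k-1$ with $x\not\prec y$. Then $x$ and some element $x'$ in level $k-1$ that \emph{is} below $y$ are "interchangeable enough" that relabeling swaps them into genuinely different labeled causets (the labels along the chain through $y$ differ), giving two distinct representatives of $C$; alternatively, pick the minimal witnessing configuration and exhibit two labelings that are not related by an order-isomorphism. The main obstacle I anticipate is exactly this step $(i)\Rightarrow(iii)$: converting "fails to be Newtonian" into a clean, explicit pair of inequivalent representatives requires choosing the right local witness (a $\Lcauset$-like or \exvee-like subconfiguration straddling two levels) and checking that the relabeling really does change the labeled isomorphism class rather than being absorbed by an automorphism — the non-Newtonian condition is precisely what guarantees the relabeling is not so absorbed. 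Everything else is bookkeeping with the definitions of stem, level, and ${\mathcal O}_{-}$.
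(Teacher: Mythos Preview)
Your cycle runs opposite to the paper's, which proves $(i)\Rightarrow(ii)\Rightarrow(iii)\Rightarrow(i)$, and that direction is where the short implications live. Your $(iii)\Rightarrow(ii)$ argument is correct. But $(ii)\Rightarrow(i)$ has a genuine gap: in the induction step you assert that the unique order-type of the $(n{+}1)$-stem ``forces the relative order of element $n$ against $\{0,\dots,n-1\}$'', yet this is precisely what fails when the already-fixed labeled $n$-stem $\lc{D}$ has nontrivial order-automorphisms --- two automorphic but distinct down-sets $P,P'\subset\lc{D}$ give \emph{different} labeled $(n{+}1)$-causets representing the \emph{same} $(n{+}1)$-order. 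What actually rules out such $P\neq P'$ is that the resulting $(n{+}1)$-order would then have two non-isomorphic $n$-substems, contradicting (ii); but extracting that is essentially the paper's $(ii)\Rightarrow(iii)$ argument, not the bare induction you sketch. Your $(i)\Rightarrow(iii)$ swap of $x$ and $x'$ also needs a patch: a raw label-swap need not preserve the natural-labeling condition (take $z$ with label strictly between those of $x$ and $x'$ lying below exactly one of them). This is repairable --- first choose a representative in which $x,x'$ carry \emph{consecutive} labels, and then the swap is both valid and provably different since the relation to $y$ changes --- but you should say so.

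By contrast, the paper's three steps are each short. $(i)\Rightarrow(ii)$ is immediate: two distinct $n$-order stems yield two representatives whose restrictions to $[n-1]$ realise them. $(iii)\Rightarrow(i)$ is immediate: any order-isomorphism between representatives of a Newtonian order must permute within levels, and the natural-labeling constraint then makes the two representatives equal. All the content sits in $(ii)\Rightarrow(iii)$: from a non-Newtonian witness $L(x)>L(y)$ with $x\not\succ y$, one builds a finite stem $\lc{S}$ (the inclusive past of $x$ together with levels $1,\dots,L(y)$) and observes that $\lc{S}\setminus\{x\}$ and $\lc{S}\setminus\{y\}$ have different numbers of levels, hence are non-isomorphic stems of the same cardinality.
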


\begin{proof}
Let $\lc{C}$ denote a representative of $C$. Recall that $L(x)$ denotes the level of element $x$.
\begin{itemize}
\item[$(i) \implies (ii)$] Suppose for contradiction that $C$ has two $n$-stems, $C_n\not=C'_n$, for some $n \leq |C|$. Then there is a representative of $C$ whose restriction to the interval $[n-1]=\{0,1,...,n-1\}$ is a representative of $C_n$, and similarly for $C'_n$. Hence $C$ has at least two representatives. Contradiction.

\item[$(ii) \implies (iii)$] Suppose for contradiction that $C$ is not Newtonian, \textit{i.e.} there exist some $x, y\in\lc{C}$ such that $L(x)>L(y)$ and $x\not\succ y$. Let $\tilde{S}$ be the union of the inclusive past of $x$ with levels $1,2,...,L(y)$. Note that $\tilde{S}\subset\lc{C}$ is a stem in $\lc{C}$. Then $\tilde{S}\setminus\{x\}$ and $\tilde{S}\setminus\{y\}$ are non-isomorphic stems in $\lc{C}$. (To see that they are non-isomorphic, note that $\tilde{S}\setminus\{x\}$ contains $L(x)-1$ levels and $\tilde{S}\setminus\{y\}$ contains $L(x)$ levels.) Hence $C$ has at least two $n$-orders as stems, where $n=|\tilde{S}|-1$. Contradiction.

\item[$(iii) \implies (i)$] Let $f:\lc{C}\rightarrow\lc{C}'$ be an order-isomorphism between two representatives of $C$. The Newtonian condition restricts the action of $f$ on each level in $\lc{C}$ to be a permutation, and therefore $f$ must be the identity map. Hence $C$ has a unique representative.
\end{itemize}\end{proof}

The equivalence of statements $(ii)$ and $(iii)$ in lemma \ref{lemma_050902} implies that:
\begin{property}A singleton $\{C_n\}$ is in singtree if and only if $C_n$ is Newtonian.\end{property}
If $\{C_n\}$ is a node in singtree then it has exactly two direct descendants in singtree: $\{\widehat{C_n}\}$ and $\{D_{n+1}\}$, where $\tilde{D}_{n+1}$ is the Newtonian order whose representative is constructed from a representative of $C_n$ by adding a new element to its maximal level. If $\{C_n\}$ is a node in singtree then it has exactly three direct descendants in covtree: its singtree descendants, $\{\widehat{C_n}\}$ and $\{D_{n+1}\}$, and the doublet $\{\widehat{C_n},D_{n+1}\}$.

A second corollary of lemma \ref{lemma_050902} is:
\begin{property}\label{corollary_170901} An infinite order $C$ is Newtonian if and only if it is a certificate of a singtree path. \end{property}

Given property \ref{corollary_170901}, it is now a simple matter to solve for the family of covtree dynamics in which the set of non-Newtonian orders is null: it is the set of covtree walks in which the walker stays in singtree with probability 1, \textit{i.e.}
\begin{equation}
\mathbb{P}(\Gamma_n)=0 \ \forall \ \Gamma_n \text{ not in singtree.} 
\end{equation}
While this family of dynamics is not physically interesting, it acts as a proof of principle, illustrating how an understanding of covtree could allow one to solve for a dynamics with particular features.

\section{Discussion} \label{secdiscussion}
Taking our cue from \cite{Dowker:2019qiz}, in this work we studied the structure of covtree and its certificates as a first step to constructing physically well-motivated covtree dynamics. 

We made progress on the cosmological front. We identified covtree's self-similar structure as well as which nodes correspond to posts and breaks. This allowed us to write down a similarity transformation between covtree dynamics, akin to the cosmic renormalisation of CSG models. As well as being a starting point for a paradigm of covariant causal set cosmology, these developments provide us with a concrete arena in which to investigate which covtree dynamics are the physically interesting ones, \textit{e.g.} by allowing us to propose a causality condition on covtree transition probabilities.

We also presented a glimpse of covtree's intricate structure. For example, we saw that there is no upper bound on the valency of covtree nodes, even in the simplest case of singletons. After a study of singtree and its certificates, we solved for the class of dynamics in which the set of Newtonian orders has measure 1, thereby providing a toy example of how an understanding of the structure of covtree could be utilised in writing down dynamics with desired features.  But, since these Newtonian dynamics are not physically interesting, this is very much a case of ``looking under the lamp-post''.

Where are we to look if not under the lamp-post? One avenue for exploration is to ask: what role, if any, do \textit{rogues} play in the physics of covtree walks?

Since in CSG models the set of rogues is null \cite{Brightwell:2002vw}, identifying covtree dynamics which possess this property is a step towards understanding what form CSG dynamics take on covtree. Moreover, if following \cite{Brightwell:2002vw} we are to choose $\mathcal{R}$ to be our sigma-algebra of observables then --- unless the covtree measure on $\mathcal{R}(\mathcal{S})$ has a unique extension to $\mathcal{R}$ --- one is faced with ambiguities both in interpretation and calculation. It is sufficient that the set of rogues be null for there to exist a unique extention, and therefore rogue-free dynamics are compatible with this approach. Finally, rogue-free dynamics are cosmologically interesting. A rogue causet contains an infinite level and as a result cannot contain an infinite sequence of posts or breaks. Therefore, that the dynamics is rogue-free is a necessary condition for the dynamics to be relevant for our cosmological paradigm.

One can draw an analogy between the condition that the set of rogues is null and the condition that the set of non-Newtonian orders is null: the former is the condition that the set of paths with more than one certificate is null, the latter the condition that the set of paths with more than one \textit{labeled} certificate is null. 
However, while we were able to solve for the latter, solving for the former poses a new challenge because it is a limiting condition: at no finite stage of the covtree walk can the claim that the growing order is a rogue be verfied or falsified. This is because for every node in covtree there exist both an infinite certificate which is a rogue and an infinite certificate which is not a rogue.

This means that there is no rogue analogue to singtree. Instead, we must look for other ways to obtain rogue-free dynamics. Pursuing the strictly stronger condition that the dynamics gives rise to infinitely many posts or breaks with unit probability is a promising route. We already know the defining feature of these covtree walks: that, with unit probability, they pass through infinitely many nodes of the form $\{\widehat{A}\}$. The challenge ahead is to formulate this feature in terms of covtree transition probabilities.

\par \textbf{Acknowledgments:} Parts of this work are a result of collaboration with Fay Dowker, Amelia Owens and Nazireen Imambaccus and have previously appeared in \cite{Imambaccus:2018, Owens:2018}. The author thanks the Perimeter Institute and the Raman Research Institute for hospitality while this work was being completed. The author is partially supported by the Beit Fellowship for Scientific Research and by the Kenneth Lindsay Scholarship Trust.

\appendix
\newpage

\section{Table of sets defined in the text}
\FloatBarrier
\begin{table}[htpb!]\centering
\begin{tabular}{| l | l |}
\hline
  $\tilde{\Omega}(\mathbb{N})$ & The set of finite labeled causets \\ \hline
  $\tilde{\Omega}$ & The set of infinite labeled causets \\ \hline
  $\Omega(n)$ & The set of $n$-orders for some $n\in\mathbb{N}^+$ \\ \hline
  $\Gamma_n$ &  A subset of $\Omega(n)$\\ \hline
  $\Omega$ & The set of infinite orders \\ \hline
  $cyl(\lc{A})$ &  $cyl(\lc{A})=\{ \lc{C}\in\tilde{\Omega} \mid \lc{A} \text{ is a stem in } \tilde{C}\}$\\ \hline
  $cert(\Gamma_n)$ & $cert(\Gamma_n)=\{ C\in\Omega \mid C \text{ is a certificate of } \Gamma_n\}$ \\ \hline
  $\tilde{\mathcal{R}}$ & The sigma-algebra generated by the cylinder sets \\ \hline
  $\mathcal{R}$ & $\mathcal{R}:=\{\mathcal{E}\in \lc{\mathcal{R}}| \lc{C}\in\mathcal{E} \text{ and }  \lc{C}\cong\lc{D} \implies \lc{D}\in\mathcal{E}\}$ \\ \hline
  $\mathcal{R}(\mathcal{S})$ & The sigma-algebra generated by the certificates sets \\ \hline
  $\Lambda$ &  $\Lambda = \bigcup\limits_{n\in\mathbb{N}^+}\{\Gamma_n \subseteq \Omega(n)| \exists \text{ a certificate for } \Gamma_n\}$\\ \hline
  $\Lambda_A$ &  $\Lambda_A = \{\Gamma_n \in \Lambda| \Gamma_n\succeq  \{\widehat{A}\}\}$ \\\hline
\end{tabular}
\caption{Table of sets defined in the text.}
\end{table}
\FloatBarrier
\bibliography{covtree}{}
\bibliographystyle{unsrt}

\end{document}